\newtheorem{theorem}{Theorem}[section]
\newtheorem{definition}{Definition}[section]
\newtheorem{lemma}{Lemma}[section]
\def\whitebox{{\hbox{\hskip 1pt
        \vrule height 6pt depth 1.5pt
        \lower 1.5pt\vbox to 7.5pt{\hrule width
                  3.2pt\vfill\hrule width 3.2pt}%
        \vrule height 6pt depth 1.5pt
        \hskip 1pt } }}
\def\qed{\ifhmode\allowbreak\else\nobreak\fi\hfill\quad\nobreak\whitebox\medbreak}
\begin{document}
\title{The 4-adic complexity of quaternary sequences with low autocorrelation and high linear complexity}
\author[a]{Feifei Yan}
\author[b]{Pinhui Ke\thanks{Corresponding author: keph@fjnu.edu.cn. This work is supported by the National Natural Science Foundation of China (No.62272420), Natural Science Foundation of Fujian Province (No. 2023J01535).}}
\author[c]{Lingmei Xiao}
\affil[a]{\it{\small{School of Mathematics and Statistics, Fujian Normal University, Fuzhou, Fujian, 350117, P. R. China}}}
\affil[b]{\it{\small{Key Laboratory of Analytical Mathematics and Applications (Ministry of Education), Fujian Normal University, Fuzhou, Fujian, 350117, P. R. China}}}
\affil[c]{\it{\small{Provincial Key Laboratory of Applied Mathematics, Putian University, Putian, Fujian, 351100, P. R. China}}}
\date{\today}
\maketitle

\begin{abstract}
Recently, Jiang et al. proposed several new classes of quaternary sequences with low autocorrelation and high linear complexity by using the inverse Gray mapping (JAMC, \textbf{69} (2023): 689--706). In this paper, we estimate the 4-adic complexity of these quaternary sequences. Our results show that these sequences have large 4-adic complexity to resist the attack of the rational approximation algorithm.
\newline \newline
\textbf{Keywords:} Quaternary sequence, 4-adic complexity, cyclotomy, inverse Gray mapping

\end{abstract}

\section{Introduction}
Pseudo-random sequences are widely used in many fields, such as modern communication, software testing, radar system, coding theory, stream cipher systems, and so on \cite{w1,w2}. In practical applications, sequences are required to have low correlation and high complexity. For the sake of simplify implementation, binary and quaternary sequences with good properties are preferred sequences in many applications.

Theoretically, quaternary sequences can be generated by linear feedback shift registers(LFSRs) \cite{w3} or feedback with carry shift registers(FCSRs) \cite{w4,w5}. The shortest lengths of LFSR and FCSR that can generate a given quaternary sequence are called linear complexity and 4-adic complexity, respectively. Compared to the linear complexity, research on the 4-adic complexity of quaternary sequences has not been fully developed. The concept of 4-adic complexity was proposed by Klapper, Goresky and Xu in the 1990's \cite{w5,w6}. In order to resist the attack of the rational approximation algorithm(RAA), the 4-adic complexity of a quaternary sequence with period $N$ should exceed $\frac{N-16}{6}$ \cite{w5}. Recently,  some works have been devoted to studying the 4-adic complexity of sequences \cite{w7,w8,w9,w10,w11,w12,w13}. Yang et al. proved the 4-adic complexity of some even period quaternary sequences with ideal autocorrelation, and the results show that the 4-adic complexity of these quaternary sequences are large enough to resist the attack of the RAA \cite{w7,w8}. Edemskiy et al. considered the symmetric 4-adic complexity of several odd period quaternary sequences with low autocorrelation, and showed that these sequences have a sufficiently large 4-adic complexity to resist the attack of the RAA \cite{w10,w11}.

Quaternary sequences can be constructed by using the inverse Gray mapping \cite{w14,w15,w16}. Recently, Jiang et al. constructed some new quaternary sequences by using the inverse Gray mapping, which have low correlation and high linear complexity \cite{w14}. In this paper, we will estimate the 4-adic complexity of these sequences.

The rest of this paper is organized as follows. In Section 2, we recall some related concepts and the required lemmas. In Section 3, we estimate the 4-adic complexity of quaternary sequences proposed in \cite{w14}. And the last section is the conclusion of this paper.
\section{Preliminaries}
In this section, we will introduce some notations and concepts which will be used later.

\subsection{Cyclotomic classes and cyclotomic numbers}
Let $p$ be an odd prime number with $p=4f+1=x^{2}+4y^{2}$, where $f$ is an odd integer, $y$ is an integer, and $x\equiv 1\pmod 4$. Let $\theta$ be a primitive element of $\mathbb{F}_{p}$. Define $D_{i}=\left\lbrace \theta^{i+4j}:j=0,1,2,\dots, f-1\right\rbrace $, where $i=0,1,2,3$. $D_{i}$ is called the cyclotomic classes of order 4 in $\mathbb{F}_{p}$. It is easy to verify that $\mathbb{F}_{p}=\left\lbrace 0\right\rbrace \cup D_{1}\cup D_{2}\cup D_{3}\cup D_{4}$.
\begin{definition}\label{def1}\cite{w17}
	The cyclotomic numbers of order 4 in $\mathbb{F}_{p}$ are defined as
	\begin{equation}
		(i,j)_{4}=|(D_{i}+1)\cap D_{j}|,0\le i,j\le3.\notag
	\end{equation}
\end{definition}
The formula for calculating cyclotomic numbers of order 4 was given as follows.
\begin{lemma}\label{lem1}\cite{w18}
	Let $p=x^{2}+4y^{2}$, $x,y\in\mathbb{Z}$ and $x\equiv1\pmod4$. When $p\equiv5\pmod8$, the relationship and formula of these cyclotomic numbers are shown in Table 1,
\begin{center}
		\begin{table}
		\centering{\caption{}{$\mathrm{\;The \;relationship\;between\;cyclotomic \;numbers}$ when $p\equiv5\pmod8$}}			
\begin{center}
	\begin{tabular}{l c c c r}
		\hline
		$(i,j)_{4}$ & $0$ & $1$ & $2$ & $3$\\
		\hline
		$0$ & A & B & C & D \\
		$1$ & E & E & D & B \\
		$2$ & A & E & A & E \\
		$3$ & E & D & B & E \\
		\hline
	\end{tabular}
\end{center}
\end{table}
\end{center}
where
\begin{equation}
	A=\frac{p-7+2x}{16}, B=\frac{p+1+2x-8y}{16}, C=\frac{p+1-6x}{16},\notag
\end{equation}
\begin{equation}
	D=\frac{p+1+2x+8y}{16}, E=\frac{p-3-2x}{16}.\notag
\end{equation}
$y$ is two possible values, depending on the choice of the primitive element $\theta$ of $\mathbb{F}_{p}$.
\end{lemma}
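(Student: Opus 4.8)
The plan is to treat this as a classical evaluation of the cyclotomic numbers of order $4$ and to reconstruct the five values $A,B,C,D,E$ from three ingredients: the universal symmetry relations among cyclotomic numbers, the elementary counting (row-sum) identities, and the arithmetic evaluation of a quartic Jacobi sum in terms of the representation $p=x^{2}+4y^{2}$. Since $p\equiv 5\pmod 8$ forces $f=(p-1)/4$ to be odd, I would first invoke the well-known relations valid for order $4$ and odd $f$, namely $(i,j)_{4}=(4-i,\,j-i)_{4}$ and $(i,j)_{4}=(j+2,\,i+2)_{4}$, with indices read modulo $4$. A direct bookkeeping with these two relations collapses the $16$ numbers $(i,j)_{4}$ into exactly the coincidence pattern displayed in Table 1, leaving only the five a priori distinct quantities $A,B,C,D,E$; this step is purely combinatorial and carries no arithmetic content.

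Next I would extract linear relations among $A,\dots,E$ by counting. For fixed $i$, summing $(i,j)_{4}$ over $j$ counts the $t\in D_{i}$ with $t+1\neq 0$, so that $\sum_{j}(i,j)_{4}=f-[\,-1\in D_{i}\,]$. Because $-1=\theta^{(p-1)/2}=\theta^{2f}$ and $f$ is odd, we have $-1\in D_{2}$, whence the four row sums yield $A+B+C+D=f$, $2E+B+D=f$ (twice, consistently), and $2A+2E=f-1$. Together with $p=4f+1$ these identities already express everything in terms of a single remaining unknown, which I expect to capture the genuine arithmetic of $p$.

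To close the system I would evaluate the relevant character sums. Let $\chi$ be the quartic multiplicative character of $\mathbb{F}_{p}$ determined by $\theta$, and write each cyclotomic number through the finite Fourier expansion $(i,j)_{4}=\frac{1}{16}\sum_{a,b}\zeta^{-(ai+bj)}J(\chi^{a},\chi^{b})$ with $\zeta=\sqrt{-1}$ and $J(\chi^{a},\chi^{b})$ the associated Jacobi sums (the degenerate terms in which a character is trivial being handled separately). The decisive input is the classical evaluation $J(\chi,\chi)=x+2y\sqrt{-1}$, where $p=x^{2}+4y^{2}$ and the normalization $x\equiv 1\pmod 4$ fixes the real part; separating real and imaginary parts then produces the linear combinations in $\pm 2x$, $-6x$ and $\pm 8y$ that appear in the stated formulas for $A,B,C,D,E$.

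I expect the main obstacle to be precisely this last arithmetic step: correctly normalizing the quartic Gauss and Jacobi sums and tracking the sign conventions so that $p=x^{2}+4y^{2}$ enters with the exact coefficients shown in the table. In particular, the sign of $y$ is determined only up to the choice of $\theta$, since replacing $\theta$ by another primitive root conjugates $\chi$ and sends $y\mapsto -y$; this accounts for the remark that $y$ has two possible values. Verifying that this ambiguity merely interchanges $B$ and $D$ while leaving $A$, $C$ and $E$ invariant, which is immediate from their explicit dependence on $y$, would complete the argument.
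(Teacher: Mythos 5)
The paper itself offers no proof of this lemma: it is quoted as a known result from Storer's book on cyclotomy \cite{w18}, so there is no internal argument to compare against; what you have written is a reconstruction of the classical proof, and it is sound. Your two symmetry relations are the right ones: $(i,j)_{4}=(4-i,\,j-i)_{4}$ holds for every $f$ (map $t\mapsto 1/t$), and $(i,j)_{4}=(j+2,\,i+2)_{4}$ holds because $f$ odd forces $-1\in D_{2}$ (map $t\mapsto -(1+t)$); their orbits on the sixteen pairs are exactly $\{(0,0),(2,0),(2,2)\}$, $\{(0,1),(1,3),(3,2)\}$, $\{(0,2)\}$, $\{(0,3),(1,2),(3,1)\}$ and a six-element orbit, which is precisely the pattern $A,B,C,D,E$ of Table 1. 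Your row-sum identities $A+B+C+D=f$, $2E+B+D=f$, $2A+2E=f-1$ are also correct (the defect $-1$ lands in row $2$ because $-1\in D_{2}$), and one checks directly that the stated formulas satisfy them. One small miscount: three independent linear relations among five unknowns leave \emph{two} free parameters, not one; this is harmless, since the quartic Jacobi sum supplies exactly two real equations (its real and imaginary parts produce $x$ and $y$), so the system closes as you intend. The genuinely delicate point is the one you flag yourself: the normalization of $J(\chi,\chi)$ (sign conventions differ between $\sum\chi(t)\chi(1+t)$ and $\sum\chi(t)\chi(1-t)$, and one must use $\chi(-1)=(-1)^{f}=-1$ here) must be tracked so that the representation $p=x^{2}+4y^{2}$ with $x\equiv 1\pmod 4$ enters with the exact coefficients $2x$, $-6x$, $\pm 8y$; a full write-up would have to carry this computation through, including the degenerate Jacobi sums with a trivial character. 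Finally, your explanation of the two possible values of $y$ is correct: replacing $\theta$ by $\theta^{k}$ with $k\equiv 3\pmod 4$ sends $(i,j)_{4}\mapsto(-i,-j)_{4}$, which fixes $A$, $C$, $E$, swaps $B\leftrightarrow D$, and hence flips the sign of $y$ in the formulas.
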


 Define $\eta_{i}(4)=\sum_{t\in D_{i}}4^{t}(\mathrm{mod}\;4^{p}-1), i=0,1,2,3.$ Obviously, we have $\eta_{0}(4)+\eta_{1}(4)+\eta_{2}(4)+\eta_{3}(4)\equiv-1(\textup{mod}\:\frac{4^{p}-1}{3})$. And we have the following lemma.
\begin{lemma}\label{lem2}\cite{w12}
Let $k,l=0,1,2,3$. Then
\begin{equation}
	\eta_{l}(4)\cdot\eta_{l+k}(4)\equiv\sum_{f=0}^{3}(k,f)_{4}\eta_{f+l}(4)+\delta(\textup{mod}\;4^{p}-1),\notag
\end{equation}
where
\begin{equation}
	\delta=\begin{cases}
		\frac{p-1}{4},\qquad &\textup{if }\;p\equiv1(\textup{mod}8),k=0 \textup{ or } p\equiv5(\textup{mod}8),k=2, \\
		0,&\textup{otherwise }.
	\end{cases}\notag
\end{equation}
\end{lemma}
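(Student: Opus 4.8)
The plan is to expand the product directly and to control the resulting double sum through the order-$4$ cyclotomic numbers. Since $4^{p}\equiv1\pmod{4^{p}-1}$, the exponents may be read modulo $p$, and I would begin from
\begin{equation}
\eta_{l}(4)\cdot\eta_{l+k}(4)\equiv\sum_{a\in D_{l}}\sum_{b\in D_{l+k}}4^{a+b}\pmod{4^{p}-1},\notag
\end{equation}
where $a+b$ is understood in $\mathbb{F}_{p}$. The decisive first step is the multiplicative substitution $b=ad$: with $a\in D_{l}$ fixed, as $b$ runs over $D_{l+k}$ the ratio $d=a^{-1}b$ runs over $D_{k}$ (because $a^{-1}D_{l+k}=D_{k}$), so that $a+b=a(1+d)$ and the sum becomes $\sum_{a\in D_{l}}\sum_{d\in D_{k}}4^{a(1+d)}$. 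This converts the additive pairing $a+b$ into a multiplicative shift by $1+d$, which is precisely the quantity the cyclotomic numbers are designed to count.

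Next I would split the inner sum according to whether $1+d$ vanishes. For each $j\in\{0,1,2,3\}$ the elements $d\in D_{k}$ with $1+d\in D_{j}$ number exactly $(k,j)_{4}$ by Definition \ref{def1}, and for any such $d$, multiplication by the fixed element $1+d\in D_{j}$ carries $D_{l}$ bijectively onto $D_{l+j}$ (indices mod $4$); hence $\sum_{a\in D_{l}}4^{a(1+d)}=\eta_{l+j}(4)$. Collecting these contributions produces the main term $\sum_{j=0}^{3}(k,j)_{4}\,\eta_{l+j}(4)$, which is exactly the sum in the statement after renaming $j\to f$.

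The only remaining term is the degenerate one, $d=-1$, for which $1+d=0$ and $4^{a(1+d)}=1$ for every $a\in D_{l}$, contributing $|D_{l}|=\frac{p-1}{4}$. This term is present precisely when $-1\in D_{k}$, and here lies the one piece of delicate bookkeeping I expect to be the main obstacle: writing $-1=\theta^{(p-1)/2}=\theta^{2f}$, the coset of $-1$ is determined by $2f\bmod 4$, which is governed by the parity of $f$ and hence by $p\bmod 8$. When $p\equiv1\pmod8$ ($f$ even) one finds $-1\in D_{0}$, and when $p\equiv5\pmod8$ ($f$ odd) one finds $-1\in D_{2}$; so the degenerate term contributes $\frac{p-1}{4}$ in exactly the two cases recorded for $\delta$ and contributes nothing otherwise. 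Since $0\notin D_{j}$, the value $d=-1$ is never counted by any $(k,j)_{4}$, so the degenerate and main parts do not overlap and the accounting is exact. Combining the two gives the claimed congruence modulo $4^{p}-1$.
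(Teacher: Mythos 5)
Your proof is correct and complete: the substitution $b=ad$ reducing the product to counts of $1+d$ over cyclotomic classes, the identification of the main term via $(k,f)_{4}$, and the determination of the coset of $-1$ (in $D_{0}$ when $p\equiv1\pmod 8$, in $D_{2}$ when $p\equiv5\pmod 8$) exactly yield the stated $\delta$. Note that the paper itself contains no proof of this lemma---it is quoted from reference \cite{w12}---and your argument is the standard derivation of such cyclotomic identities, so there is nothing to flag.
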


Let $H(4)=\eta_{0}(4)+\eta_{2}(4)-\eta_{1}(4)-\eta_{3}(4)$. Then we have the following lemma.
\begin{lemma}\label{lem3}\cite{w19}
	$H^{2}(4)\equiv p(\textup{mod}\;\frac{4^{p}-1}{3})$.
\end{lemma}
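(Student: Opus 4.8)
The plan is to expand the square $H^2(4)$ into the sixteen products $\eta_i(4)\eta_j(4)$ and linearize each one by Lemma~\ref{lem2}. Writing $\eta_i$ for $\eta_i(4)$ and noting that the coefficient attached to $\eta_i$ in $H(4)$ is $(-1)^i$, I would first record the regrouping
\begin{equation}
H^2(4)=\sum_{i=0}^{3}\sum_{j=0}^{3}(-1)^{i+j}\eta_i\eta_j=\sum_{k=0}^{3}(-1)^{k}\sum_{i=0}^{3}\eta_i\eta_{i+k},\notag
\end{equation}
where indices are read modulo $4$ and the second equality uses $(-1)^{i+j}=(-1)^{k}$ whenever $j\equiv i+k$. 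This reduces the entire computation to understanding the four inner sums $S_k:=\sum_{i=0}^{3}\eta_i\eta_{i+k}$.

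Next I would apply Lemma~\ref{lem2} (with $l=i$) to each term of $S_k$ and sum over $i=0,1,2,3$. As $i$ runs through a complete residue system modulo $4$, the linear part $\sum_{f=0}^{3}(k,f)_4\,\eta_{f+i}$ simply permutes the four classes cyclically, so its sum over $i$ equals $\bigl(\sum_{f=0}^{3}(k,f)_4\bigr)(\eta_0+\eta_1+\eta_2+\eta_3)$. By the relation stated just before Lemma~\ref{lem2}, $\eta_0+\eta_1+\eta_2+\eta_3\equiv-1\pmod{\frac{4^p-1}{3}}$, and the constant term $\delta$ contributes once per value of $i$. Hence
\begin{equation}
S_k\equiv -\sum_{f=0}^{3}(k,f)_4+4\delta_k\pmod{\tfrac{4^p-1}{3}},\notag
\end{equation}
where $\delta_k$ is the value of $\delta$ from Lemma~\ref{lem2} attached to the given $k$. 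Note that the congruence modulo $4^p-1$ supplied by Lemma~\ref{lem2} descends to the smaller modulus $\frac{4^p-1}{3}$, since the latter divides the former.

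It then remains to evaluate the row sums $\sum_{f=0}^{3}(k,f)_4$, and here only the definition $(k,f)_4=|(D_k+1)\cap D_f|$ is needed, not the full Table~1. The row sum counts the nonzero elements of $D_k+1$, hence equals $f$ unless $-1\in D_k$, in which case it equals $f-1$. The location of $-1$ is governed by the parity of $f$: since $-1=\theta^{(p-1)/2}=\theta^{2f}$, we have $-1\in D_2$ when $f$ is odd (i.e. $p\equiv5\pmod 8$) and $-1\in D_0$ when $f$ is even (i.e. $p\equiv1\pmod 8$). Substituting these row sums together with the matching $\delta_k$ into $H^2(4)\equiv\sum_{k=0}^{3}(-1)^k S_k$, I would check that the contributions collapse to $4f+1=p$ in each case.

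I expect the main obstacle to be precisely this final bookkeeping rather than any deep step: both the $-1$ correction in the row sum and the single nonzero value of $\delta_k$ are controlled by the same datum, namely which cyclotomic class contains $-1$, so one must keep the case split on $p\bmod 8$ perfectly consistent across the two contributions for the cancellation to land on $p$. Once the signs and the two special values (the row sum reduced by $1$ and the term $4\delta_k=4f$) are correctly aligned in the relevant class, the claimed identity $H^2(4)\equiv p\pmod{\frac{4^p-1}{3}}$ follows.
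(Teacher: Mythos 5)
Your proof is correct, and it is worth noting that the paper itself offers no proof of this lemma at all: it is quoted from reference \cite{w19} as a known fact. Your argument therefore does something the paper does not, namely derive $H^{2}(4)\equiv p\ (\mathrm{mod}\ \frac{4^{p}-1}{3})$ entirely from the paper's own toolkit (Lemma~\ref{lem2} plus the relation $\eta_{0}(4)+\eta_{1}(4)+\eta_{2}(4)+\eta_{3}(4)\equiv-1$), which is exactly the machinery the authors use to prove Lemma~\ref{lem5}. The decomposition $H^{2}(4)=\sum_{k}(-1)^{k}S_{k}$ with $S_{k}=\sum_{i}\eta_{i}\eta_{i+k}$ is sound, the reduction $S_{k}\equiv-\sum_{f}(k,f)_{4}+4\delta_{k}$ is right, and your observation that the row sum $\sum_{f}(k,f)_{4}$ equals $|D_{k}|$ minus the indicator of $-1\in D_{k}$ cleverly avoids Table~1 altogether, making the argument uniform in $p\bmod 8$. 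The final bookkeeping you defer does close: writing $k^{*}$ for the class containing $-1$ ($k^{*}=2$ if $p\equiv5\ (\mathrm{mod}\ 8)$, $k^{*}=0$ if $p\equiv1\ (\mathrm{mod}\ 8)$), the correction to the row sum and the nonzero $\delta$ occur at the same even index $k^{*}$, so with $|D_{k}|=(p-1)/4$ one gets $S_{k^{*}}\equiv-\bigl(\frac{p-1}{4}-1\bigr)+4\cdot\frac{p-1}{4}$ and $S_{k}\equiv-\frac{p-1}{4}$ otherwise, whence the alternating sum collapses to $4\cdot\frac{p-1}{4}+1=p$ since the sign at $k^{*}$ is $(-1)^{k^{*}}=+1$ and the $-\frac{p-1}{4}$ terms cancel in pairs. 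What the citation buys the authors is brevity; what your derivation buys is a self-contained paper and the insight that Lemma~\ref{lem3} is not an external input but a formal consequence of Lemma~\ref{lem2}. One cosmetic caution: you use $f$ both as the summation index in $\sum_{f}(k,f)_{4}$ and as $(p-1)/4$ (a clash the paper itself commits); in a final write-up you should rename one of them.
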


\subsection{Definition of sequences}
Let $\mathbb{Z}_{N}=\left\lbrace 0,1,\dots,N-1\right\rbrace $ denotes the ring of residue classes of modulo $N$ and $\mathbb{Z}_{N}^{*}$ denotes the set of the elements in $\mathbb{Z}_{N}$ that are coprime with $N$. The set $\textup{Supp}(s)=\left\lbrace t\in\mathbb{Z}_{N}: s(t)=1\right\rbrace $ is called the support set of a binary sequence $s$ with a period of $N$. The inverse Gray mapping $\phi$ is defined by $\phi[0,0]=0$, $\phi[0,1]=1$, $\phi[1,1]=2$, $\phi[1,0]=3$. This is a common method for constructing quaternary sequences.

In reference \cite{w14}, four classes of quaternary sequences with period $p$ were defined by using inverse Gray mapping. Due to the different support sets of component sequences, there are four sequences in each class. The first class of quaternary sequences are $u_{1}^{}=\phi[s_{1},s_{2}]$, $u_{2}^{}=\phi[s_{1},s_{3}]$, $u_{3}^{}=\phi[s_{2},s_{4}]$, $u_{4}^{}=\phi[s_{3},s_{4}]$; the second class of quaternary sequences are $u_{5}^{}=\phi[s_{1},s_{2}+1]$, $u_{6}^{}=\phi[s_{1},s_{3}+1]$, $u_{7}^{}=\phi[s_{2},s_{4}+1]$, $u_{8}^{}=\phi[s_{3},s_{4}+1]$; the third class of quaternary sequences are $u_{9}^{}=\phi[s_{1}+1,s_{2}]$, $u_{10}^{}=\phi[s_{1}+1,s_{3}]$, $u_{11}^{}=\phi[s_{2}+1,s_{4}]$, $u_{12}^{}=\phi[s_{3}+1,s_{4}]$; the fourth class of quaternary sequences are $u_{13}^{}=\phi[s_{1}+1,s_{2}+1]$, $u_{14}^{}=\phi[s_{1}+1,s_{3}+1]$, $u_{15}^{}=\phi[s_{2}+1,s_{4}+1]$, $u_{16}^{}=\phi[s_{3}+1,s_{4}+1]$. Here $s_{1}$, $s_{2}$, $s_{3}$, $s_{4}$ are four binary cyclotomic sequences with support sets $D_{0}\cup D_{1}$, $D_{0}\cup D_{3}$, $D_{1}\cup D_{2}$ and $D_{2}\cup D_{3}$, respectively.

For a quaternary sequence $u$ with period $N$, its generating polynomial $U(x)$ is defined as $\sum_{t=0}^{N-1}u(t)x^{t}$. The 4-adic complexity, denoted by $\Phi_{4}(u)$, is defined by
\begin{equation}
	\lfloor \mathrm{log}_{4}(\dfrac{4^{N}-1}{\mathrm{gcd}(4^{N}-1,U(4))}+1)\rfloor,\notag
\end{equation}
where $\lfloor m\rfloor$ is the greatest integer that is less than or equal to $m$ \cite{w7}.

In the rest of the paper, $(\frac{\cdot}{\cdot})$ denotes the Legendre symbol.

\section{4-Adic complexity of quaternary sequences}
In this section, we will estimate the 4-adic complexity of four classes of quaternary sequences in reference \cite{w14}.
\begin{lemma}\label{lem5}
	Let $p=x^{2}+4y^{2}$, where $x,y\in\mathbb{Z}$ and $x\equiv1(\textup{mod}\;4)$. When $p\equiv5(\textup{mod}\;8)$, then
\begin{itemize}
\item [\rm{(1)}]
     $4(3+\eta_{1}(4)+2\eta_{2}(4)+3\eta_{3}(4))(3+\eta_{3}(4)+2\eta_{0}(4)+3\eta_{1}(4))\equiv-2(3+4y)H(4)+5p+9(\textup{mod}\;\frac{4^{p}-1}{3});$
\item [\rm{(2)}]
       $4(1+\eta_{2}(4)+2\eta_{1}(4)+3\eta_{0}(4))(1+\eta_{0}(4)+2\eta_{3}(4)+3\eta_{2}(4))\equiv-2(1+4y)H(4)+5p+1(\textup{mod}\;\frac{4^{p}-1}{3});$
\item [\rm{(3)}]
      $4(\eta_{3}(4)+2\eta_{0}(4)+3\eta_{1}(4))(\eta_{1}(4)+2\eta_{2}(4)+3\eta_{3}(4))\equiv2(3-4y)H(4)+5p+9(\textup{mod}\;\frac{4^{p}-1}{3});$
\item [\rm{(4)}] $4(\eta_{2}(4)+2\eta_{3}(4)+3\eta_{0}(4))(\eta_{0}(4)+2\eta_{1}(4)+3\eta_{2}(4))\equiv-2(3-4y)H(4)+5p+9(\textup{mod}\;\frac{4^{p}-1}{3});$
\item [\rm{(5)}]
    $4(\eta_{3}(4)+2\eta_{2}(4)+3\eta_{1}(4))(\eta_{1}(4)+2\eta_{0}(4)+3\eta_{3}(4))\equiv2(3+4y)H(4)+5p+9(\textup{mod}\;\frac{4^{p}-1}{3});$
\item [\rm{(6)}]
  $4(1+\eta_{3}(4)+2\eta_{0}(4)+3\eta_{1}(4))(1+\eta_{1}(4)+2\eta_{2}(4)+3\eta_{3}(4))\equiv2(1-4y)H(4)+5p+1(\textup{mod}\;\frac{4^{p}-1}{3});$
\item [\rm{(7)}]
  $4(1+\eta_{1}(4)+2\eta_{0}(4)+3\eta_{3}(4))(1+\eta_{3}(4)+2\eta_{2}(4)+3\eta_{1}(4))\equiv2(1+4y)H(4)+5p+1(\textup{mod}\;\frac{4^{p}-1}{3});$
\item [\rm{(8)}]
  $4(2+\eta_{1}(4)+2\eta_{0}(4)+3\eta_{3}(4))(2+\eta_{3}(4)+2\eta_{2}(4)+3\eta_{1}(4))\equiv-2(1-4y)H(4)+5p+1(\textup{mod}\;\frac{4^{p}-1}{3}).$
\end{itemize}
\end{lemma}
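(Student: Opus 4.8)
The plan is to establish all eight congruences by a single mechanism and to record only the arrangement, not the arithmetic. For each item I would multiply out the product of the two linear forms in $\eta_{0}(4),\dots,\eta_{3}(4)$ into a constant term, a linear part $\sum_{i}c_{i}\eta_{i}(4)$, and a quadratic part $\sum_{i,j}c_{ij}\eta_{i}(4)\eta_{j}(4)$. Every quadratic term $\eta_{l}(4)\eta_{l+k}(4)$ is then rewritten by Lemma \ref{lem2}: for $k=0,1,3$ it becomes a linear combination of the $\eta_{f}(4)$ whose coefficients are the cyclotomic numbers read from Table 1, while for $k=2$ the same linear combination appears together with the carry $\delta=\frac{p-1}{4}$. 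After this substitution the whole product takes the shape $a_{0}\eta_{0}(4)+a_{1}\eta_{1}(4)+a_{2}\eta_{2}(4)+a_{3}\eta_{3}(4)+a$, in which the $a_{i}$ and $a$ are explicit integer combinations of $A,B,C,D,E$, of the original constant, and of the $\delta$ contributions. I would illustrate this bookkeeping on item (1) and treat the other seven identically.

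The decisive step is the collapse of this linear form. The eight products have been set up so that after the reduction one always finds $a_{0}=a_{2}$ and $a_{1}=a_{3}$. Granting this, the identity $\sum_{i}a_{i}\eta_{i}(4)=\frac{a_{0}-a_{1}}{2}H(4)+\frac{a_{0}+a_{1}}{2}\bigl(\eta_{0}(4)+\eta_{1}(4)+\eta_{2}(4)+\eta_{3}(4)\bigr)$ together with $\eta_{0}(4)+\eta_{1}(4)+\eta_{2}(4)+\eta_{3}(4)\equiv-1\;(\textup{mod}\;\frac{4^{p}-1}{3})$ turns the product into $\frac{a_{0}-a_{1}}{2}H(4)+a-\frac{a_{0}+a_{1}}{2}$. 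Finally I would insert $A=\frac{p-7+2x}{16}$, $B=\frac{p+1+2x-8y}{16}$, $C=\frac{p+1-6x}{16}$, $D=\frac{p+1+2x+8y}{16}$, $E=\frac{p-3-2x}{16}$; the coefficient $\frac{a_{0}-a_{1}}{2}$ reduces to the stated multiple of $H(4)$, the $y$-dependence entering only through $B$ and $D$ (the sole carriers of $y$), while $a-\frac{a_{0}+a_{1}}{2}$ reduces to $5p+9$ or $5p+1$. Notably all of the $x$-terms must cancel, since neither the coefficient of $H(4)$ nor the additive constant on any right-hand side contains $x$.

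The remaining items are handled by the same computation with the coefficients taken from the corresponding pair of linear forms; the sign of the $H(4)$-term and the value of the additive constant (the $9$ in items (1),(3),(4),(5) versus the $1$ in items (2),(6),(7),(8)) are fixed by which cyclotomic classes enter the two factors. I expect the genuine difficulty to lie exactly in the decisive step: nothing a priori forces $a_{0}=a_{2}$ and $a_{1}=a_{3}$, so one must verify that the asymmetric parts — the coefficients of $\eta_{0}(4)-\eta_{2}(4)$ and of $\eta_{1}(4)-\eta_{3}(4)$ — vanish after the Table 1 values are substituted, and simultaneously that the surviving $x$-contributions cancel in both $\frac{a_{0}-a_{1}}{2}$ and the final constant. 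This is a finite but error-prone verification, and it is the only place where the specific shape of the formulas for $A,B,C,D,E$ is used. Lemma \ref{lem3} is not required here, as every right-hand side is linear in $H(4)$; it will enter only in the subsequent estimation of $\gcd(4^{p}-1,U(4))$.
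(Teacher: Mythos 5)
Your proposal is correct and follows essentially the same route as the paper's proof: expand the product, reduce each $\eta_{l}(4)\eta_{l+k}(4)$ via Lemma \ref{lem2} and Table 1 (with $\delta=\frac{p-1}{4}$ only for $k=2$), observe that the resulting linear form has $a_{0}=a_{2}$ and $a_{1}=a_{3}$ --- the collapse you flag as the decisive step indeed holds, since in the paper both coefficients come out as $16A+36B+20D+72E+24$ and $52A+8B+12C+24D+48E+48$ respectively --- and then pass to $H(4)$ using $\eta_{0}(4)+\eta_{1}(4)+\eta_{2}(4)+\eta_{3}(4)\equiv-1(\textup{mod}\;\frac{4^{p}-1}{3})$ and substitute the explicit values of $A,B,C,D,E$, where the $x$-terms cancel exactly as you predict. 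Your remark that Lemma \ref{lem3} is not needed at this stage also matches the paper, which only invokes it later when bounding $\gcd(U(4),\frac{4^{p}-1}{3})$.
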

\begin{proof} The proof of this lemma can be completed by using Lemmas \ref{lem1} and \ref{lem2}. Since the proofs are similar, we only prove (1). Firstly, it is easy to verify that
\begin{align*}
&4(3+\eta_{1}(4)+2\eta_{2}(4)+3\eta_{3}(4))(3+\eta_{3}(4)+2\eta_{0}(4)+3\eta_{1}(4))\\&=36+24\eta_{0}(4)+48\eta_{1}(4)+24\eta_{2}(4)+48\eta_{3}(4)+8\eta_{1}(4)\eta_{0}(4)+12\eta_{1}(4)\eta_{1}(4)\\&+4\eta_{1}(4)\eta_{3}(4)+16\eta_{2}(4)\eta_{0}(4)+24\eta_{2}(4)\eta_{1}(4)+8\eta_{2}(4)\eta_{3}(4)+24\eta_{3}(4)\eta_{0}(4)\\&+36\eta_{3}(4)\eta_{1}(4)+12\eta_{3}(4)\eta_{3}(4).
\end{align*}
According to  Lemmas \ref{lem1} and \ref{lem2}, we have
\begin{equation*}
	\eta_{1}(4)\eta_{0}(4)=E\eta_{0}(4)+E\eta_{1}(4)+D\eta_{2}(4)+B\eta_{3}(4)(\textup{mod}\;4^{p}-1);
\end{equation*}
\begin{equation*}
	\eta_{1}(4)\eta_{1}(4)=D\eta_{0}(4)+A\eta_{1}(4)+B\eta_{2}(4)+C\eta_{3}(4)(\textup{mod}\;4^{p}-1);
\end{equation*}
\begin{equation*}
	\eta_{1}(4)\eta_{3}(4)=E\eta_{0}(4)+A\eta_{1}(4)+E\eta_{2}(4)+A\eta_{3}(4)+\frac{p-1}{4}(\textup{mod}\;4^{p}-1);
\end{equation*}
\begin{equation*}
	\eta_{2}(4)\eta_{0}(4)=A\eta_{0}(4)+E\eta_{1}(4)+A\eta_{2}(4)+E\eta_{3}(4)+\frac{p-1}{4}(\textup{mod}\;4^{p}-1);
\end{equation*}
\begin{equation*}
	\eta_{2}(4)\eta_{1}(4)=B\eta_{0}(4)+E\eta_{1}(4)+E\eta_{2}(4)+D\eta_{3}(4)(\textup{mod}\;4^{p}-1);
\end{equation*}
\begin{equation*}
	\eta_{2}(4)\eta_{3}(4)=D\eta_{0}(4)+B\eta_{1}(4)+E\eta_{2}(4)+E\eta_{3}(4)(\textup{mod}\;4^{p}-1);
\end{equation*}
\begin{equation*}
	\eta_{3}(4)\eta_{0}(4)=E\eta_{0}(4)+D\eta_{1}(4)+B\eta_{2}(4)+E\eta_{3}(4)(\textup{mod}\;4^{p}-1);
\end{equation*}
\begin{equation*}
	\eta_{3}(4)\eta_{1}(4)=E\eta_{0}(4)+A\eta_{1}(4)+E\eta_{2}(4)+A\eta_{3}(4)+\frac{p-1}{4}(\textup{mod}\;4^{p}-1);
\end{equation*}
\begin{equation*}
	\eta_{3}(4)\eta_{3}(4)=B\eta_{0}(4)+C\eta_{1}(4)+D\eta_{2}(4)+A\eta_{3}(4)(\textup{mod}\;4^{p}-1),
\end{equation*}
then we get
\begin{align*}
	&4(3+\eta_{1}(4)+2\eta_{2}(4)+3\eta_{3}(4))(3+\eta_{3}(4)+2\eta_{0}(4)+3\eta_{1}(4))\\&=(16A+36B+20D+72E+24)\eta_{0}(4)\\&+(52A+8B+12C+24D+48E+48)\eta_{1}(4)\\&+(16A+36B+20D+72E+24)\eta_{2}(4)\\&+(52A+8B+12C+24D+48E+48)\eta_{3}(4)+14p+22(\textup{mod}\;4^{p}-1).
\end{align*}
Again by Lemma \ref{lem1}, we get
\begin{align*}
	&4(3+\eta_{1}(4)+2\eta_{2}(4)+3\eta_{3}(4))(3+\eta_{3}(4)+2\eta_{0}(4)+3\eta_{1}(4))\\&=(9p+7-8y)\eta_{0}(4)+(9p+19+8y)\eta_{1}(4)+(9p+7-8y)\eta_{2}(4)\\&+(9p+19+8y)\eta_{3}(4)+14p+22(\textup{mod}\;4^{p}-1)\\&=(9p+13)(\eta_{0}(4)+\eta_{1}(4)+\eta_{2}(4)+\eta_{3}(4))\\&-(8y+6)(\eta_{0}(4)-\eta_{1}(4)+\eta_{2}(4)-\eta_{3}(4))+14p+22(\textup{mod}\;4^{p}-1)\\&\equiv-2(3+4y)H(4)+5p+9(\textup{mod}\;\frac{4^{p}-1}{3}),
\end{align*}
which completes the proof.
\end{proof}	

\begin{lemma}\label{lem6}
	Let $U_{3}(4)=\eta_{2}(4)+2\eta_{3}(4)+3\eta_{0}(4)(\textup{mod}\;4^{p}-1)$. If $d_{3}>3$ is a prime divisor of $gcd(U_{3}(4),\frac{4^{p}-1}{3})$. Then either $d_{3}=11$ or $d_{3}=1+2p$ where $3p=4(3-4y)^{2}+83$.
\end{lemma}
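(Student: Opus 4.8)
The plan is to exploit the factorization identity in Lemma~\ref{lem5}(4), whose first factor is exactly $U_3(4)=\eta_2(4)+2\eta_3(4)+3\eta_0(4)$, together with the quadratic relation $H^2(4)\equiv p$ of Lemma~\ref{lem3} and an order argument for the prime $d_3$.

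First I would fix a prime $d_3>3$ dividing $\gcd(U_3(4),\frac{4^p-1}{3})$. Since $d_3\mid\frac{4^p-1}{3}$, every congruence stated modulo $\frac{4^p-1}{3}$ holds modulo $d_3$; in particular Lemma~\ref{lem5}(4) gives $4U_3(4)\big(\eta_0(4)+2\eta_1(4)+3\eta_2(4)\big)\equiv -2(3-4y)H(4)+5p+9 \pmod{d_3}$. Because $d_3\mid U_3(4)$ the left-hand side vanishes, leaving $2(3-4y)H(4)\equiv 5p+9\pmod{d_3}$. Squaring and replacing $H^2(4)$ by $p$ via Lemma~\ref{lem3} eliminates $H(4)$ and yields $d_3\mid N$, where $N:=(5p+9)^2-4(3-4y)^2p=25p^2+\big(90-4(3-4y)^2\big)p+81$.

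Next I would pin down the shape of $d_3$. From $d_3\mid\frac{4^p-1}{3}$ and $d_3>3$ one gets $4^p\equiv1\pmod{d_3}$ while $4\not\equiv1$ (as $d_3\nmid 3$), so the order of $4$ modulo $d_3$ is the prime $p$; hence $p\mid d_3-1$ and $d_3=1+kp$ for some positive integer $k$. Since $d_3$ is an odd prime, $d_3-1=kp$ is even while $p$ is odd, so $k$ is even. Using $kp\equiv-1\pmod{d_3}$ to reduce $N\equiv0$ and clearing denominators by $k^2$ converts the condition into $d_3\mid M_k$, where $M_k:=81k^2+\big(4(3-4y)^2-90\big)k+25$. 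Writing $M_k=t\,d_3$ with $t\ge1$ is the key bookkeeping device: for $k=2$ it reads $M_2=169+8(3-4y)^2=t(2p+1)$, and substituting $4(3-4y)^2=3p-83$ (i.e. $3p=4(3-4y)^2+83$) gives $M_2=3(2p+1)$, so that case indeed produces the divisor $d_3=2p+1$ advertised in the statement.

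The hard part is to show that no other pair $(k,t)$ survives. Substituting $4(3-4y)^2=16p-16x^2-96y+36$ and $p=x^2+4y^2$ into $M_k=t\,d_3$ turns each pair into a conic in $(x,y)$, namely $-tx^2+(64-4t)y^2-96y=54-81k+\frac{t-25}{k}$; for $k=2,\,t=3$ this is precisely $3p=4(3-4y)^2+83$. I would then clear the remaining pairs in two ways. Many are insoluble by a residue obstruction: for example $k=2,\,t=5$ gives $5x^2=44y^2-96y+118$, impossible since $y^2+y+2\equiv0\pmod5$ has no root. The genuinely delicate branches are those with $4(3-4y)^2$ small, because there the inequalities coming from $4(3-4y)^2\le 16p+48\sqrt{p-1}+20$ (from $p=x^2+4y^2$, $x^2\ge1$) no longer cap $t$ and permit large $k$ with cofactor $t$ near $1$; these must be eliminated by the same conic/congruence analysis together with the requirement that $d_3=1+kp$ actually be a prime dividing $\frac{4^p-1}{3}$. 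Finally, for small $p$ the size bound on $4(3-4y)^2$ is too weak to force $t=3$, and a direct check of those finitely many cases leaves exactly the sporadic prime $d_3=11=1+2\cdot5$ at $p=5$. Thus the main obstacle, and the step requiring the most care, is this simultaneous bounding of $k$ and $t$ combined with the elimination of every residue class except $t=3$ while isolating the exceptional divisor $d_3=11$.
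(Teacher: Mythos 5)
Your reduction coincides with the paper's own: you invoke Lemma~\ref{lem5}(4) (whose first factor is exactly $U_3(4)$), square the congruence $2(3-4y)H(4)\equiv 5p+9\pmod{d_3}$ via Lemma~\ref{lem3}, and use the order-of-$4$ argument to write $d_3=1+kp$ with $k$ even; your $M_k$ is the paper's reduced quantity rescaled by $k$. Up to that point the proposal is correct. From there on, however, it is a plan rather than a proof: you say yourself that ``the hard part is to show that no other pair $(k,t)$ survives,'' and that part is never carried out. You dispatch a single pair ($k=2$, $t=5$) by a residue obstruction, assert that ``many'' others die similarly, and defer the ``genuinely delicate branches'' and the small-$p$ cases to an unspecified conic/congruence analysis plus primality considerations. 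Since for every even $k$ the number $t=M_k/(1+kp)$ is a candidate cofactor, the set of pairs to be excluded is not even finite a priori, and no bound on $k$ or $t$ is ever established. That elimination is the entire content of Lemma~\ref{lem6}, so this is a genuine gap, not an omitted routine verification.

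The missing idea is the paper's mod-$8$ determination of the cofactor, which is what makes the case analysis finite and elementary. Writing the divisibility as an exact integer equation $-4(3-4y)^2+25p+90-81k=c\,(1+kp)$ and reducing modulo $8$ (using $p\equiv5\pmod 8$ and that $3-4y$ is odd) forces $c\equiv3\pmod 8$; in particular $c\notin\{\pm1,-3\}$. Combined with the positivity estimate $25p+90-4(3-4y)^2\geq 36y^2+96y+79=(6y+8)^2+15>0$ (from $x^2\geq1$), this leaves three cases. If $c\geq11$, then $ck<25$, so $k=2$ and $c=11$, which is precisely $d_3=1+2p$ with $3p=4(3-4y)^2+83$. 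If $c=3$, then $k\leq8$; a mod-$3$ argument removes $k=2,8$, the quadratic-residue argument removes $k=4$ (since $1+4p\equiv5\pmod 8$ makes $2$ a nonresidue, contradicting $d_3\mid 4^p-1$), and a mod-$7$ argument removes $k=6$. If $c\leq-5$, positivity forces $p\in\{5,13\}$, producing the sporadic divisor $d_3=11$ at $p=5$. Your proposed substitutes do not recover this: without the mod-$8$ constraint, the cofactors $c=-1$ and $c=-3$ remain in play, positivity then only yields $p<81$ with $k$ still ranging over a large set depending on $p$, and a residue obstruction on the associated conic cannot be expected to work uniformly in $k$. So the argument as proposed cannot be completed along the lines sketched; it needs the exact-equation-plus-mod-$8$ step (or an equivalent uniform bound) that the paper supplies.
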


\begin{proof}
	Let $p\equiv5\pmod8$ and $d_{3}>3$ be a prime divisor of $\textup{gcd}(U_{3}(4),\frac{4^{p}-1}{3})$. According to Lemma \ref{lem5} $\textup{(4)}$, we have $d_{3}|-2(3-4y)H(4)+5p+9$, that is $2(3-4y)H(4)\equiv5p+9(\textup{mod}\;d_{3})$. Then by Lemma \ref{lem3} we have $4(3-4y)^{2}p\equiv25p^{2}+90p+81(\textup{mod}\;d_{3})$. From $4^{p}\equiv1(\textup{mod}\;d_{3})$ and $d_{3}\not=3$, we have the order of 4 modulo $d_{3}$ is $p$. Since $d_{3}$ is odd prime, by Euler's Theorem we have $4^{d_{3}-1}\equiv1(\textup{mod}\;d_{3})$. Thus $p|d_{3}-1$. Then $d_{3}=1+2fp$, where $f\in\mathbb{N_{+}}$, that is $2fp\equiv-1(\textup{mod}\;d_{3})$. Hence we have $-4(3-4y)^{2}+25p+90-162f\equiv0(\textup{mod}\;d_{3})$ or $-4(3-4y)^{2}+25p+90-162f=(1+2h)(1+2fp)$. From the last congruence we get $3-2f\equiv1+2h+2f+4fh\pmod8$. Thus $h\equiv1\pmod2$ and $1+2h=3+8t$ where $t\in\mathbb{Z}$. So, we get
	\begin{equation}
		-4(3-4y)^{2}+25p+90-162f=(3+8t)(1+2fp)\tag{1}\label{code2}
	\end{equation}
	
	The discussion can be divided into three cases as follows.
	
	\noindent\textbf{(i) Case 1}: If $t<0$, since $25p+90-4(3-4y)^{2}\ge36y^{2}+96y+79=(6y+8)^{2}+15>0$, then by (\ref{code2}) we have $-162f<(3+8t)(1+2fp)$, that is $162f>(-3-8t)(1+2fp)\ge5+10fp$. Hence $p=5$ or $p=13$. Combining (\ref{code2}) we have, when $p=5$, $d_{3}=11$; when $p=13$, we obtain a contradiction.
	
	\noindent\textbf{(ii) Case 2}: If $t>0$, since $90-162f-4(3-4y)^{2}<0$, then by (\ref{code2}) we get $25p>(3+8t)(1+2fp)$. Hence $f=1$, $t=1$, that is $d_{3}=1+2p$ is a prime divisor of $\frac{4^{p}-1}{3}$ and $3p=4(3-4y)^{2}+83$.
	
	\noindent\textbf{(iii) Case 3}: If $t=0$, we have $-4(3-4y)^{2}+25p+90-162f=3(1+2fp)$. Since $90-4(3-4y)^{2}-162f-3<0$, it follows that $25p>6fp$. Hence, the possible values of $f$ are $1,2,3,4$ and $-y^{2}+p\equiv0\pmod3$, thus $p\equiv y^{2}\equiv1\pmod3$. In this case $f\not=1$ and $f\not=4$; if not, $1+2fp\equiv0\pmod3$, which is contradictory.
	
	\textbf{(iii-a)} Let $f=2$. Then $d_{3}=1+4p$ and $d_{3}\equiv5\pmod8$. Thus $(\frac{2}{d_{3}})=(-1)^{\frac{d_{3}^{2}-1}{8}}=-1$, that is 2 is not a quadratic residue of modulo $d_{3}$. Further $2^{2p}\not\equiv1(\textup{mod}\;d_{3})$. Then $4^{p}\not\equiv1(\textup{mod}\;d_{3})$. This contradicts with $d_{3}|4^{p}-1$, thus $d_{3}\not=1+4p$.
	
	\textbf{(iii-b)} Let $f=3$. Then $d_{3}=1+6p$, thus $-4(3-4y)^{2}+25p+90-486=3(1+6p)$, that is $-4(3-4y)^{2}=7(57-p)$. Hence $3-4y=7k$, where $k\in\mathbb{Z}$ and $-4\times7k^{2}+p-57=0$, then $p=4\times7k^{2}+57\equiv1(\textup{mod}\;7)$, also by $p\equiv5\pmod8$ and Chinese Remainder Theorem, we arrive at a contradiction.
	
	The proof is completed.
\end{proof}

\begin{theorem}\label{the1}
	Let $u_{3}=\phi[s_{2}, s_{4}]$. Then the 4-adic complexity of quaternary sequence $u_{3}$ satisfies $\Phi_{4}(u_{3})\ge\lfloor \textup{log}_{4}(\frac{4^{p}-1}{3(1+2p)}+1)\rfloor$.
\end{theorem}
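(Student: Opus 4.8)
The plan is to bound $\gcd(4^{p}-1,U_{3}(4))$ from above by $3(1+2p)$ and then read the $4$-adic complexity off its definition. First I would identify the generating polynomial of $u_{3}$. Since $u_{3}=\phi[s_{2},s_{4}]$ with support sets $\mathrm{Supp}(s_{2})=D_{0}\cup D_{3}$ and $\mathrm{Supp}(s_{4})=D_{2}\cup D_{3}$, evaluating $\phi$ on each cyclotomic class gives $u_{3}(t)=3,0,1,2$ for $t\in D_{0},D_{1},D_{2},D_{3}$ and $u_{3}(0)=0$; hence $U_{3}(4)\equiv 3\eta_{0}(4)+\eta_{2}(4)+2\eta_{3}(4)\pmod{4^{p}-1}$, which is exactly the quantity treated in Lemma \ref{lem6}.

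Next I would use the coprime factorisation $4^{p}-1=3\cdot\frac{4^{p}-1}{3}$. As $4$ has order $3$ modulo $9$, we have $9\nmid 4^{p}-1$ and $\frac{4^{p}-1}{3}\equiv p\not\equiv0\pmod 3$, so $\gcd\bigl(3,\frac{4^{p}-1}{3}\bigr)=1$ and therefore $\gcd(4^{p}-1,U_{3}(4))=\gcd(3,U_{3}(4))\cdot g'$ with $g'=\gcd\bigl(\frac{4^{p}-1}{3},U_{3}(4)\bigr)$. Reducing modulo $3$ gives $U_{3}(4)\equiv\sum_{t}u_{3}(t)\equiv 6f\equiv0\pmod 3$, so $\gcd(3,U_{3}(4))=3$ and it remains to prove $g'\mid 1+2p$.

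To control $g'$ I would first restrict its prime factors and then kill higher powers. By Lemma \ref{lem6} every prime divisor of $g'$ is $11$ or $1+2p$; but $11\mid4^{p}-1$ forces $p=5$ (the order of $4$ modulo $11$ is $5$) and then $1+2p=11$, so in every case $g'$ is a power of the single prime $1+2p$, and $g'=1$ unless $1+2p$ is prime. For the multiplicity I would combine Lemma \ref{lem5}(4) with Lemma \ref{lem3}: writing $M=-2(3-4y)H(4)+5p+9$, the congruence $4U_{3}(4)\bigl(\eta_{0}(4)+2\eta_{1}(4)+3\eta_{2}(4)\bigr)\equiv M\pmod{\frac{4^{p}-1}{3}}$ forces $g'\mid M$, and squaring $2(3-4y)H(4)\equiv5p+9\pmod{g'}$ while using $H^{2}(4)\equiv p$ yields $g'\mid(5p+9)^{2}-4(3-4y)^{2}p$. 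Whenever $1+2p\mid g'$, Lemma \ref{lem6} guarantees $3p=4(3-4y)^{2}+83$, under which the right-hand integer factors as $(1+2p)(11p+81)$; since $1+2p\mid 11p+81$ would entail $1+2p\mid151$, impossible for a prime $p$, the prime $1+2p$ divides $g'$ to the first power only. Hence $g'\mid 1+2p$, the case $p=5$ being checked directly from $\frac{4^{5}-1}{3}=11\cdot31$.

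Putting the pieces together gives $\gcd(4^{p}-1,U_{3}(4))=3g'\mid 3(1+2p)$, whence $\frac{4^{p}-1}{\gcd(4^{p}-1,U_{3}(4))}\ge\frac{4^{p}-1}{3(1+2p)}$, and the monotonicity of $m\mapsto\lfloor\log_{4}(m+1)\rfloor$ delivers the claimed bound. I expect the multiplicity step to be the crux: one cannot appeal to $(1+2p)^{2}\nmid4^{p}-1$ (a Wieferich-type assertion beyond reach), so the arithmetic relation $3p=4(3-4y)^{2}+83$ supplied by Lemma \ref{lem6} together with the explicit factorisation $(1+2p)(11p+81)$ are precisely what confine $g'$ to the first power of $1+2p$.
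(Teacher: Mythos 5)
Your proposal is correct and follows essentially the same route as the paper's proof: compute $U_{3}(4)=3\eta_{0}(4)+\eta_{2}(4)+2\eta_{3}(4)$, split off the factor $3$ (using $9\nmid 4^{p}-1$), restrict the remaining prime divisors via Lemma~\ref{lem6}, and exclude a repeated factor $(1+2p)^{2}$ by squaring the congruence of Lemma~\ref{lem5}(4) combined with Lemma~\ref{lem3}. If anything, your final step is more explicit than the paper's, which merely asserts that $-22p^{2}-173p-81=2(2p+1)^{2}m$ is impossible, whereas you supply the factorization $22p^{2}+173p+81=(1+2p)(11p+81)$ and the resulting contradiction $1+2p\mid 151$.
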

	\begin{proof}
		To calculate the 4-adic complexity of $u_{3}=\phi[s_{2}, s_{4}]$, we need to consider $\textup{gcd}(U_{3}(4),4^{p}-1)$. Since
		\begin{align*}			U_{3}(4)&=\sum_{t=0}^{p-1}u_{3}(t)4^{t}(\textup{mod}\;4^{p}-1)\\&=\sum_{t=0}^{p-1}\phi[s_{2}(t),s_{4}(t)]4^{t}(\textup{mod}\;4^{p}-1)\\&=\sum_{\substack{t=0\\t\in D_{2}}}^{p-1}4^{t}+2\sum_{\substack{t=0\\t\in D_{3}}}^{p-1}4^{t}+3\sum_{\substack{t=0\\t\in D_{0} }}^{p-1}4^{t}(\textup{mod}\;4^{p}-1)\\&=\eta_{2}(4)+2\eta_{3}(4)+3\eta_{0}(4)(\textup{mod}\;4^{p}-1),
		\end{align*}
 we have $U_{3}(4)\equiv6\times\frac{p-1}{4}\equiv0\pmod3$ and $4^{p}\equiv1\pmod3$, hence 3 is a divisor of $\textup{gcd}(U_{3}(4),4^{p}-1)$. If $9|4^{p}-1$, then from $4^{3}\equiv1\pmod9$ and $4^{p}\equiv1\pmod9$, we obtain $p=3$, which contradicts with $p\equiv5\pmod8$. Thus 9 does not divide $4^{p}-1$, that is 9 does not divide $\textup{gcd}(U_{3}(4),4^{p}-1)$.
		
		Let $p\equiv5\pmod8$ and $r_{3}$ be a divisor of $\textup{gcd}(U_{3}(4),4^{p}-1)$ and $\textup{gcd}(r_{3},3)=1$. For $p=5$, by the definition of sequence, $U_{3}(4)=4^{4}+2\times4^{3}+3\times4=396$, thus $\textup{gcd}(U_{3}(4),4^{5}-1)=\textup{gcd}(396,1023)=33$ and $\Phi_{4}(u_{3})=2$. For $p>5$, according to Lemma \ref{lem6} we get $r_{3}$ is a power of $2p+1$ and $2p+1$ is a prime satisfying $3p=4(3-4y)^{2}+83$. If $(2p+1)^{2}|r_{3}$, by Lemma \ref{lem5} (4), we get $(2p+1)^{2}|-2(3-4y)H(4)+5p+9$, thus $4(3-4y)^{2}p\equiv25p^{2}+90p+81(\textup{mod}\;(2p+1)^{2})$ and by Lemma \ref{lem6} we have $3p=4(3-4y)^{2}+83$, that is $4(3-4y)^{2}=3p-83$. Thus $(3p-83)p-25p^{2}-90p-81\equiv0(\textup{mod}\;(2p+1)^{2})$. Then we have $-22p^{2}-173p-81=2(2p+1)^{2}m$, where $m\in\mathbb{Z}$. This is impossible.
		
		Therefore, the possible prime factor of $\textup{gcd}(U_{3}(4),4^{p}-1)$ is $2p+1$, and there is no repetition factor. And $\textup{gcd}(U_{3}(4),4^{p}-1)=3$ when $2p+1$ is not a prime divisor of $4^{p}-1$, which completes the proof.
	\end{proof}

\begin{theorem}\label{the2}
	Let $u_{1}=\phi[s_{1},s_{2}]$, $u_{9}=\phi[s_{1}+1,s_{2}]$, $u_{11}=\phi[s_{2}+1,s_{4}]$. Then for $i=1,9,11$, the 4-adic complexity of these quaternary sequences $u_{i}$ satisfies $\Phi_{4}(u_{i})\ge\lfloor \textup{log}_{4}(\frac{4^{p}-1}{3(1+2p)}+1)\rfloor$, where $1+2p=11$ or the $p$ in $1+2p$ satisfies $3p=4(3-4y)^{2}+83$.
\end{theorem}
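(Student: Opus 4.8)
The plan is to reduce the three sequences to cases already settled and then carry out a single new computation. The crucial structural fact is that the inverse Gray map satisfies $\phi[1-a,b]=3-\phi[a,b]$ for all bits $a,b$; thus complementing the \emph{first} component sequence replaces each quaternary symbol $v$ by $3-v$. Since $u_{9}=\phi[s_{1}+1,s_{2}]$ is obtained from $u_{1}=\phi[s_{1},s_{2}]$ exactly by complementing the first component, we have $u_{9}(t)=3-u_{1}(t)$ for every $t$, and likewise $u_{11}(t)=3-u_{3}(t)$ from $u_{3}=\phi[s_{2},s_{4}]$. At the level of generating polynomials this gives
\begin{align*}
U_{9}(4)=\sum_{t=0}^{p-1}\bigl(3-u_{1}(t)\bigr)4^{t}=3\cdot\frac{4^{p}-1}{3}-U_{1}(4)\equiv-U_{1}(4)\pmod{4^{p}-1},
\end{align*}
and similarly $U_{11}(4)\equiv-U_{3}(4)\pmod{4^{p}-1}$. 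Because $\gcd(-a,n)=\gcd(a,n)$, I obtain $\gcd(U_{9}(4),4^{p}-1)=\gcd(U_{1}(4),4^{p}-1)$ and $\gcd(U_{11}(4),4^{p}-1)=\gcd(U_{3}(4),4^{p}-1)$, so $u_{9}$ inherits the 4-adic complexity of $u_{1}$ and $u_{11}$ that of $u_{3}$.

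For $u_{11}$ this already finishes the argument, since Theorem~\ref{the1} supplies the claimed bound for $u_{3}$ and the gcd identity transfers it. It remains to treat $u_{1}$ (which also settles $u_{9}$). Computing the generating polynomial as in Theorem~\ref{the1}, I would find
\begin{align*}
U_{1}(4)&=\sum_{t\in D_{0}}2\cdot 4^{t}+\sum_{t\in D_{1}}3\cdot 4^{t}+\sum_{t\in D_{3}}4^{t}\\
&\equiv\eta_{3}(4)+2\eta_{0}(4)+3\eta_{1}(4)\pmod{4^{p}-1},
\end{align*}
which is precisely the first factor in Lemma~\ref{lem5}\,(3). Reducing each $\eta_{i}(4)$ modulo $3$ via $4\equiv1$ gives $\eta_{i}(4)\equiv\frac{p-1}{4}$, so $U_{1}(4)\equiv6\cdot\frac{p-1}{4}\equiv0\pmod3$; combined with $9\nmid4^{p}-1$ (as in Theorem~\ref{the1}) this shows $3$ divides $\gcd(U_{1}(4),4^{p}-1)$ to the first power only.

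The core of the proof is an analogue of Lemma~\ref{lem6} for $U_{1}(4)$. Let $d_{1}>3$ be a prime divisor of $\gcd(U_{1}(4),\frac{4^{p}-1}{3})$. Multiplying $U_{1}(4)$ by its companion factor $\eta_{1}(4)+2\eta_{2}(4)+3\eta_{3}(4)$ and invoking Lemma~\ref{lem5}\,(3) gives $2(3-4y)H(4)\equiv-(5p+9)\pmod{d_{1}}$; squaring and applying Lemma~\ref{lem3} yields $4(3-4y)^{2}p\equiv25p^{2}+90p+81\pmod{d_{1}}$, which is exactly the congruence driving Lemma~\ref{lem6}. Since only the squared quantity matters, the sign discrepancy between parts (3) and (4) of Lemma~\ref{lem5} is irrelevant, and the three-case analysis of Lemma~\ref{lem6} applies verbatim: either $d_{1}=11$ (with $p=5$), or $d_{1}=1+2p$ is prime with $3p=4(3-4y)^{2}+83$. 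The repeated-factor step of Theorem~\ref{the1} likewise carries over: substituting $4(3-4y)^{2}=3p-83$ into the congruence modulo $(1+2p)^{2}$ produces the impossible relation $-22p^{2}-173p-81=2(1+2p)^{2}m$, so $1+2p$ occurs at most once. Hence $\gcd(U_{1}(4),4^{p}-1)$ equals $3$ or $3(1+2p)$, which gives the stated bound for $u_{1}$, and therefore for $u_{9}$.

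I expect the main obstacle to be conceptual rather than computational: one must spot the symmetry $\phi[1-a,b]=3-\phi[a,b]$ that collapses $u_{9}$ and $u_{11}$ onto $u_{1}$ and $u_{3}$. Once this is in hand, the only genuinely new work is identifying $U_{1}(4)$ with the correct factor of Lemma~\ref{lem5}\,(3)---an incorrect match would alter the constraint on $p$---and verifying that squaring makes the $U_{1}$ argument coincide with the proof of Lemma~\ref{lem6}.
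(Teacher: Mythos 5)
Your proposal is correct, and its core is exactly the paper's intended route: the paper omits this proof with the remark that it is ``similar to Theorem~\ref{the1}'', and your treatment of $u_{1}$ is precisely that template --- identify $U_{1}(4)=\eta_{3}(4)+2\eta_{0}(4)+3\eta_{1}(4)$ with the first factor in Lemma~\ref{lem5}(3), multiply by the companion factor to get $2(3-4y)H(4)\equiv-(5p+9)\pmod{d_{1}}$, square and invoke Lemma~\ref{lem3} to land on the congruence $4(3-4y)^{2}p\equiv25p^{2}+90p+81\pmod{d_{1}}$ that drives Lemma~\ref{lem6}, and then rerun its case analysis and the repeated-factor step of Theorem~\ref{the1} (you are right that the sign difference between parts (3) and (4) of Lemma~\ref{lem5} disappears upon squaring). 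Your one genuine addition is making the symmetry $\phi[1-a,b]=3-\phi[a,b]$ explicit, which yields $U_{9}(4)\equiv-U_{1}(4)$ and $U_{11}(4)\equiv-U_{3}(4)\pmod{4^{p}-1}$ and hence equality of the relevant gcds. This is worth keeping: neither $U_{9}(4)=3+\eta_{0}(4)+3\eta_{2}(4)+2\eta_{3}(4)$ nor $U_{11}(4)=3+3\eta_{1}(4)+2\eta_{2}(4)+\eta_{3}(4)$ appears as a factor in Lemma~\ref{lem5}, so some reduction of this kind (or an equivalent manipulation using $\sum_{i}\eta_{i}(4)\equiv-1\pmod{\frac{4^{p}-1}{3}}$) is actually required before ``similar to Theorem~\ref{the1}'' is meaningful for $u_{9}$ and $u_{11}$; the paper leaves this implicit, and your negation trick supplies it cleanly while also explaining why all three sequences share the constraint $3p=4(3-4y)^{2}+83$ rather than its $(3+4y)$ counterpart.
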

\begin{proof}
	Since the proof is similar to Theorem \ref{the1}, we omit it.
\end{proof}

\begin{lemma}\label{lem7}
	Let $U_{5}(4)=1+\eta_{2}(4)+2\eta_{1}(4)+3\eta_{0}(4)(\textup{mod}\;4^{p}-1)$. If $d_{5}>3$ is a prime divisor of $gcd(U_{5}(4),\frac{4^{p}-1}{3})$. Then the possible values of $d_{5}$ are
\textup{(1)} $d_{5}=11$; \textup{(2)} $d_{5}=1+2p$ and $3p=4(1+4y)^{2}+3$; \textup{(3)} $d_{5}=1+6p$ and $7p=4(1+4y)^{2}-1$; \textup{(4)} $d_{5}=1+8p$ and $p=4(1+4y)^{2}+1$.
\end{lemma}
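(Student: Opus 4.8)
The plan is to follow the proof of Lemma~\ref{lem6} verbatim in structure, replacing part~(4) of Lemma~\ref{lem5} by part~(2), which is the product identity attached to the factor $U_5(4)$. Assume $p\equiv 5\pmod 8$ and let $d_5>3$ be a prime divisor of $\gcd(U_5(4),\frac{4^p-1}{3})$. Since $U_5(4)=1+\eta_2(4)+2\eta_1(4)+3\eta_0(4)$ is exactly the first factor appearing in Lemma~\ref{lem5}(2), $d_5$ divides the left-hand product, hence divides $-2(1+4y)H(4)+5p+1$; that is, $2(1+4y)H(4)\equiv 5p+1\pmod{d_5}$. Squaring and inserting $H^{2}(4)\equiv p$ from Lemma~\ref{lem3} (valid because $d_5\mid\frac{4^p-1}{3}$) yields
\begin{equation*}
4(1+4y)^2p\equiv 25p^2+10p+1\pmod{d_5}.
\end{equation*}

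As in Lemma~\ref{lem6}, the relation $4^p\equiv1\pmod{d_5}$ with $d_5\neq3$ forces $4$ to have order $p$ modulo $d_5$, so $p\mid d_5-1$; as $d_5$ is odd this gives $d_5=1+2fp$ for some $f\in\mathbb{N}_{+}$ and $2fp\equiv-1\pmod{d_5}$. I would then factor $p$ out of the quadratic congruence and multiply by $2f$, reducing it to $-4(1+4y)^2+25p+10-2f\equiv0\pmod{d_5}$, and lift this to an integer identity $-4(1+4y)^2+25p+10-2f=(1+2h)(1+2fp)$ (the left side is odd). A residue computation modulo $8$ --- literally the same as in Lemma~\ref{lem6}, since $-4(1+4y)^2+25p+10-2f\equiv 3-2f\pmod 8$ just as $-4(3-4y)^2+25p+90-162f$ did --- pins the odd quotient down to $1+2h=3+8t$, producing the master equation
\begin{equation*}
-4(1+4y)^2+25p+10-2f=(3+8t)(1+2fp).\tag{$\ast$}
\end{equation*}

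The remaining work is a case analysis on $t$, after recording the positivity $25p+10-4(1+4y)^2=25x^2+36y^2-32y+6>0$ (using $x^2\ge1$). If $t<0$, then $(\ast)$ gives $25p+10-4(1+4y)^2=(3+8t)(1+2fp)+2f\le -5(1+2fp)+2f<0$, contradicting positivity; so, in contrast with Lemma~\ref{lem6}, the smaller constant here ($10$ versus $90$) wipes out the entire negative branch. If $t>0$, a size comparison forces $f=1,\ t=1$, and $(\ast)$ collapses to $4(1+4y)^2=3p-3$, i.e.\ $d_5=1+2p$ with $3p=4(1+4y)^2+3$ (case~(2)). If $t=0$, positivity of $A:=4(1+4y)^2=25p+7-2f-6fp$ forces $f\in\{1,2,3,4\}$: here $f=3$ gives $7p=4(1+4y)^2-1,\ d_5=1+6p$ (case~(3)); $f=4$ gives $p=4(1+4y)^2+1,\ d_5=1+8p$ (case~(4)); $f=1$ gives $A=19p+5$, which together with $A\le 16p-12+16\sqrt{p-1}$ (from $4y^2\le p-1$) forces $p=5$, hence $d_5=11$ (case~(1)); and $f=2$, i.e.\ $d_5=1+4p$, is excluded because $1+4p\equiv5\pmod 8$ makes $2$ a quadratic nonresidue, so $4^p=2^{(d_5-1)/2}\equiv(\frac{2}{d_5})=-1\pmod{d_5}$, contradicting $4^p\equiv1$.

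I expect the $t=0$ branch to be the main obstacle. Unlike Lemma~\ref{lem6}, where a reduction modulo $3$ eliminated $f=1$ and $f=4$, here all of $f=1,3,4$ survive and must each be matched to the correct listed congruence, while the Legendre-symbol argument must be checked to exclude only $f=2$ (indeed $1+6p\equiv7$ and $1+8p\equiv1\pmod 8$ leave $2$ a residue, so $f=3,4$ are safe). The delicate points are the quantitative estimate $4(1+4y)^2\le 16p-12+16\sqrt{p-1}$ needed to squeeze $f=1$ down to $p=5$, and the verification that the modulo-$8$ step really yields $1+2h=3+8t$ (equivalently $h\equiv1\pmod 4$) rather than merely $h$ odd.
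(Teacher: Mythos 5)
Your proposal is correct and follows essentially the same route as the paper's own proof: Lemma \ref{lem5}(2) combined with Lemma \ref{lem3}, the order argument giving $d_5=1+2fp$, the same master equation $(\ast)$, and the same case analysis on $t$ with the quadratic-residue exclusion of $f=2$, arriving at exactly the four listed outcomes. The only (harmless) deviation is in the $t=0$, $f=1$ subcase, where you squeeze $19p+5=4(1+4y)^2\le 16p-12+16\sqrt{p-1}$ to force $p=5$, while the paper substitutes $p=x^2+4y^2$ into $19p=4(1+4y)^2-5$ and solves directly for $x=1$, $y=1$.
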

\begin{proof}
	Let $p\equiv5\pmod8$ and $d_{5}>3$ be a prime divisor of $\textup{gcd}(U_{5}(4),\frac{4^{p}-1}{3})$. According to Lemma \ref{lem5} $\textup{(2)}$, we have $d_{5}|-2(1+4y)H(4)+5p+1$, thus $5p+1\equiv2(1+4y)H(4)(\textup{mod}\;d_{5})$. Then by Lemma \ref{lem3} we have $4(1+4y)^{2}p\equiv25p^{2}+10p+1(\textup{mod}\;d_{5})$. From $4^{p}\equiv1(\textup{mod}\;d_{5})$ and $d_{5} \not=3$, we have the order of 4 modulo $d_{5}$ is $p$. Since $d_{5}$ is odd prime, by Euler's Theorem we have $4^{d_{5}-1}\equiv1(\textup{mod}\;d_{5})$. Thus $p|d_{5}-1$. Then $d_{5}=1+2fp$, where $f\in\mathbb{N_{+}}$, that is $2fp\equiv-1(\textup{mod}\;d_{5})$. Hence we have $-4(1+4y)^{2}+25p+10-2f\equiv0(\textup{mod}\;d_{5})$ or $-4(1+4y)^{2}+25p+10-2f=(1+2h)(1+2fp)$. From the last congruence we get $3-2f\equiv1+2f+2h+4fh(\textup{mod}\;8)$. Thus $h\equiv1(\textup{mod}\;2)$ and $1+2h=3+8t$ where $t\in\mathbb{Z}$. So, we get
	\begin{equation}
		-4(1+4y)^{2}+25p+10-2f=(3+8t)(1+2fp)\tag{2}\label{code3}
	\end{equation}

The discussion can be divided into three cases as follows.

\noindent\textbf{(i) Case 1}: If $t<0$, since $25p+10-4(1+4y)^{2}\ge36y^{2}-32y+31=(6y-\frac{8}{3})^{2}+\frac{215}{9}>0$,  then by (\ref{code3}) we have $-2f<(3+8t)(1+2fp)$, that is $2f>(-3-8t)(1+2fp)\ge5+10fp$. It is impossible.

\noindent\textbf{(ii) Case 2}: If $t>0$, since $10-2f-4(1+4y)^{2}<0$, then by (\ref{code3}) we get that $25p>(3+8t)(1+2fp)$. Hence $f=1,t=1$, that is $d_{5}=1+2p$ is a prime divisor of $\frac{4^{p}-1}{3}$ and $3p=4(1+4y)^{2}+3$.

\noindent\textbf{(iii) Case 3}: If $t=0$, we have $-4(1+4y)^{2}+25p+10-2f=3(1+2fp)$. Since $10-2f-4(1+4y)^{2}-3<0$, it follows that $25p>6fp$. Hence, the possible values of $f$ are $1,2,3,4$.

\textbf{(iii-a)} Let $f=1$. Then $d_{5}=1+2p$, thus $-4(1+4y)^{2}+25p+8=3(1+2p)$ or $19p=4(1+4y)^{2}-5$. In this case, only holds true when $x=1$ and $y=1$. It follows that $p=x^{2}+4y^{2}=5$ and $d_{5}=1+2p=11$.

\textbf{(iii-b)} Let $f=2$. Then $d_{5}=1+4p$ and $d_{5}\equiv5\pmod8$. 	Thus $\left( \frac{2}{d_{5}}\right)=-1$, that is 2 is not a quadratic residue of modulo $d_{5}$. Further $2^{2p}\not\equiv1(\textup{mod}\;d_{5})$. Then $4^{p}\not\equiv1(\textup{mod}\;d_{5})$. This contradicts with $d_{5}|4^{p}-1$, thus $d_{5}\not=1+4p$.

\textbf{(iii-c)} Let $f=3$. Then $d_{5}=1+6p$, thus $-4(1+4y)^{2}+25p+4=3(1+6p)$ or $7p=4(1+4y)^{2}-1$.

\textbf{(iii-d)} Let $f=4$. Then $d_{5}=1+8p$, thus $-4(1+4y)^{2}+25p+2=3(1+8p)$ or $p=4(1+4y)^{2}+1$.

The proof is completed.
\end{proof}

\begin{theorem}\label{the3}
	Let $u_{5}=\phi[s_{1},s_{2}+1]$. Then the 4-adic complexity of quaternary sequence $u_{5}$ satisfies $\Phi_{4}(u_{5})\ge\lfloor \textup{log}_{4}(\frac{4^{p}-1}{(1+2p)(1+6p)(1+8p)}+1)\rfloor$.
\end{theorem}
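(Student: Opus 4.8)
The plan is to follow the template of Theorem~\ref{the1}, replacing Lemma~\ref{lem6} by Lemma~\ref{lem7}. First I would read off the generating polynomial of $u_5=\phi[s_1,s_2+1]$ by splitting $\mathbb{Z}_p$ into $\{0\}$ and the four cyclotomic classes. With $\mathrm{Supp}(s_1)=D_0\cup D_1$ and $\mathrm{Supp}(s_2)=D_0\cup D_3$, the table for $\phi$ gives $u_5(0)=1$, and $u_5(t)=3,2,1,0$ for $t\in D_0,D_1,D_2,D_3$ respectively, so that $U_5(4)\equiv 1+\eta_2(4)+2\eta_1(4)+3\eta_0(4)\pmod{4^p-1}$, exactly the quantity studied in Lemma~\ref{lem7}. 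Next I would locate the factor $3$: since $4\equiv1\pmod3$ we have $\eta_i(4)\equiv|D_i|=f\pmod3$, hence $U_5(4)\equiv 1+6f\equiv1\pmod3$, so $3\nmid U_5(4)$. This is precisely why, unlike in Theorem~\ref{the1}, no factor $3$ appears in the denominator of the bound.

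Because $9\nmid 4^p-1$ (the order of $4$ modulo $9$ is $3$, while $3\nmid p$), the prime $3$ divides $4^p-1$ only to the first power, and since $3\nmid U_5(4)$ we get $\gcd(U_5(4),4^p-1)=\gcd(U_5(4),\tfrac{4^p-1}{3})$, whose every prime divisor exceeds $3$. Lemma~\ref{lem7} then confines each such prime to the list $\{11,\,1+2p,\,1+6p,\,1+8p\}$, each under its own defining relation for $4(1+4y)^2$. Since those relations are pairwise incompatible for a fixed $p$ (e.g.\ the relations for $1+2p$ and $1+6p$ together force $p=-1$), at most one prime from the list can occur; in any case the squarefree part of the gcd divides $(1+2p)(1+6p)(1+8p)$.

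The main obstacle, exactly as in Theorem~\ref{the1}, is to rule out repeated prime factors. For a candidate $d_5\in\{1+2p,1+6p,1+8p\}$ I would assume $d_5^2\mid U_5(4)$; then $d_5^2\mid\tfrac{4^p-1}{3}$, so Lemma~\ref{lem3} gives $H^2(4)\equiv p\pmod{d_5^2}$, and reading Lemma~\ref{lem5}(2) modulo $d_5^2$ (where $U_5(4)$ is the first factor) yields $d_5^2\mid -2(1+4y)H(4)+5p+1$, whence $4(1+4y)^2p\equiv 25p^2+10p+1\pmod{d_5^2}$. Substituting the relation attached to each $d_5$ converts this into $d_5^2$ dividing a fixed quadratic in $p$, and these quadratics factor as $(2p+1)(11p+1)$, $(6p+1)(3p+1)$ and $(8p+1)(3p+1)$ respectively; thus $d_5^2\mid U_5(4)$ would force $1+2p\mid 11p+1$, $1+6p\mid 3p+1$ or $1+8p\mid 3p+1$, each impossible by a size/divisibility comparison. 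Hence the gcd is squarefree and divides $(1+2p)(1+6p)(1+8p)$, so $\tfrac{4^p-1}{\gcd(U_5(4),4^p-1)}\ge\tfrac{4^p-1}{(1+2p)(1+6p)(1+8p)}$, and the stated bound on $\Phi_4(u_5)$ follows from the definition of the $4$-adic complexity; the isolated case $p=5$ (where the relevant prime is $11=1+2p$) is verified by direct computation.
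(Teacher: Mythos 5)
Your proposal is correct and follows essentially the same route as the paper's own proof: the same computation of $U_{5}(4)=1+\eta_{2}(4)+2\eta_{1}(4)+3\eta_{0}(4)$, the same mod-3 analysis, Lemma~\ref{lem7} to confine the prime divisors of the gcd to $1+2p$, $1+6p$, $1+8p$, the same treatment of $p=5$ by direct computation, and the same use of Lemma~\ref{lem5}(2) with Lemma~\ref{lem3} to exclude repeated prime factors. Your explicit factorizations $(2p+1)(11p+1)$, $(6p+1)(3p+1)$, $(8p+1)(3p+1)$ merely make transparent the contradictions that the paper asserts (its $-22p^{2}-13p-1$, $-18p^{2}-9p-1$, $-24p^{2}-11p-1$), so this is a refinement of exposition rather than a different argument.
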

\begin{proof}To calculate the 4-adic complexity of $u_{5}=\phi[s_{1},s_{2}+1]$, we need to consider $\textup{gcd}(U_{5}(4),4^{p}-1)$. Since
		\begin{align*}
U_{5}(4)&=\sum_{t=0}^{p-1}u_{5}(t)4^{t}(\textup{mod}\;4^{p}-1)\\&=\sum_{t=0}^{p-1}\phi[s_{1}(t),s_{2}(t)+1]4^{t}(\textup{mod}\;4^{p}-1)\\&=\sum_{\substack{t=0\\t\in D_{2}\cup\left\lbrace 0\right\rbrace}}^{p-1}4^{t}+2\sum_{\substack{t=0\\t\in D_{1}}}^{p-1}4^{t}+3\sum_{\substack{t=0\\t\in D_{0}}}^{p-1}4^{t}(\textup{mod}\;4^{p}-1)\\&=1+\eta_{2}(4)+2\eta_{1}(4)+3\eta_{0}(4)(\textup{mod}\;4^{p}-1),
		\end{align*}
 we have $U_{5}(4)\equiv1+6\times \frac{p-1}{4}\equiv1\pmod3$, hence 3 does not divide $\textup{gcd}(U_{5}(4),4^{p}-1)$. Let $p\equiv5\pmod8$ and $r_{5}$ be a divisor of $\textup{gcd}(U_{5}(4),4^{p}-1)$ and $\textup{gcd}(r_{5},3)=1$. For $p=5$, by the definition of sequence, $U_{5}(4)=1+4^{4}+2\times4^{2}+3\times4=301$, thus $\textup{gcd}(U_{5}(4),4^{5}-1)=\textup{gcd}(301,1023)=1$ and $\Phi_{4}(u_{5})=5$. For $p>5$, according to Lemma \ref{lem7}, the possible prime factors of $r_{5}$ are $1+2p$, $1+6p$ and $1+8p$, and satisfy the conditions in Lemma \ref{lem7}.
	
	If $(1+2p)^{2}|r_{5}$, by Lemma \ref{lem5} (2), we get $(1+2p)^{2}|-2(1+4y)H(4)+5p+1$, then by Lemma \ref{lem3} we get  $4(1+4y)^{2}p\equiv25p^{2}+10p+1(\textup{mod}\;(1+2p)^{2})$ and by Lemma \ref{lem7} we have $3p=4(1+4y)^{2}+3$, that is $4(1+4y)^{2}=3p-3$. Thus $(3p-3)p\equiv25p^{2}+10p+1(\textup{mod}\;(1+2p)^{2})$. Then we have $-22p^{2}-13p-1=2(1+2p)^{2}m$, where $m\in\mathbb{Z}$. We obtain a contradiction.
	
	If $(1+6p)^{2}|r_{5}$, according to Lemma \ref{lem5} (2) we have $(1+6p)^{2}|-2(1+4y)H(4)+5p+1$, then by Lemma \ref{lem3} we get $4(1+4y)^{2}p\equiv25p^{2}+10p+1(\textup{mod}\;(1+6p)^{2})$ and by Lemma \ref{lem7} we have $7p=4(1+4y)^{2}-1$, that is $4(1+4y)^{2}=7p+1$. Thus $(7p+1)p\equiv25p^{2}+10p+1(\textup{mod}\;(1+6p)^{2})$. Then we get $-18p^{2}-9p-1=2(1+6p)^{2}m$, where $m\in\mathbb{Z}$. This will create a contradiction.
	
	If $(1+8p)^{2}|r_{5}$, by Lemma \ref{lem5} (2), we get $(1+8p)^{2}|-2(1+4y)H(4)+5p+1$, then by Lemma \ref{lem3} we get $4(1+4y)^{2}p\equiv25p^{2}+10p+1(\textup{mod}\;(1+8p)^{2})$ and by Lemma \ref{lem7} we have $p=4(1+4y)^{2}+1$, that is $4(1+4y)^{2}=p-1$. Thus $(p-1)p\equiv25p^{2}+10p+1(\textup{mod}\;(1+8p)^{2})$. Then we get $-24p^{2}-11p-1=2(1+8p)^{2}m$, where $m\in\mathbb{Z}$. This is impossible.
	
	Therefore, the possible prime factors of $\textup{gcd}(U_{5}(4),4^{p}-1)$ are $1+2p$, $1+6p$ and $1+8p$, and there are no repetition factors. And if $1+2p$, $1+6p$ and $1+8p$ are not prime factors of $4^{p}-1$, the prime factors of $\textup{gcd}(U_{5}(4),4^{p}-1)$ do not contain $1+2p$, $1+6p$ and $1+8p$, respectively, which completes the proof.
\end{proof}

\begin{theorem}\label{the4}
	Let $u_{7}=\phi[s_{2},s_{4}+1]$, $u_{13}=\phi[s_{1}+1,s_{2}+1]$, $u_{15}=\phi[s_{2}+1,s_{4}+1]$. Then for $i=7,13,15$, the 4-adic complexity of quaternary sequences $u_{i}$ satisfies $\Phi_{4}(u_{i})\ge\lfloor \textup{log}_{4}(\frac{4^{p}-1}{(1+2p)(1+6p)(1+8p)}+1)\rfloor$ , where either $p=5$ or the $p$ in $1+2p$ satisfies $3p=4(1+4y)^{2}+3$, the $p$ in $1+6p$ satisfies $7p=4(1+4y)^{2}-1$ and the $p$ in $1+8p$ satisfies $p=4(1+4y)^{2}+1$.
\end{theorem}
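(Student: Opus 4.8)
The plan is to handle $u_7,u_{13},u_{15}$ by the same scheme used for $u_5$ in Theorem \ref{the3}, the three cases differing only in bookkeeping. First I would expand each generating polynomial through the inverse Gray mapping over the support sets of the component sequences. Tabulating $\phi[\cdot,\cdot]$ on $D_0,D_1,D_2,D_3$ together with the point $t=0$ gives, modulo $4^p-1$, the expressions $U_7(4)\equiv 1+\eta_1(4)+2\eta_0(4)+3\eta_3(4)$, $U_{13}(4)\equiv 2+\eta_1(4)+2\eta_2(4)+3\eta_3(4)$ and $U_{15}(4)\equiv 2+\eta_0(4)+2\eta_1(4)+3\eta_2(4)$. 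Reducing modulo $3$, where $4^t\equiv1$ and each $|D_i|=\frac{p-1}{4}$ so that $6\cdot\frac{p-1}{4}\equiv0$, yields $U_7(4)\equiv1$ and $U_{13}(4)\equiv U_{15}(4)\equiv2\pmod3$. Hence $3\nmid\gcd(U_i(4),4^p-1)$ for all three indices, and no separate treatment of the prime $3$ is required, exactly as for $u_5$.

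Next, for each $i$ I would pair $U_i(4)$ with its conjugate under $t\mapsto -t$. Because $(p-1)/2\equiv2\pmod4$ gives $-1\in D_2$ when $p\equiv5\pmod8$, this conjugation acts as $\eta_j(4)\mapsto\eta_{j+2}(4)$; consequently the partner of $U_7(4)$ is precisely the second factor appearing in Lemma \ref{lem5}(7), while the partners of $U_{13}(4)$ and $U_{15}(4)$ are the correspondingly shifted triples. Expanding each product by Lemma \ref{lem2} and collapsing the resulting cyclotomic numbers by Lemma \ref{lem1}—the identical mechanical computation that produced Lemma \ref{lem5}—I expect every product to reduce, modulo $\frac{4^p-1}{3}$, to an expression of the shape $\pm2(1\pm4y)H(4)+5p+1$. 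For $U_7$ this is literally Lemma \ref{lem5}(7); for $U_{13}$ and $U_{15}$ it requires two further expansions of the same type, and this is where essentially all the labour lies.

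With that normal form in hand, the descent of Lemma \ref{lem7} transfers almost verbatim. For a prime divisor $d>3$ of $\gcd(U_i(4),\frac{4^p-1}{3})$ the order of $4$ modulo $d$ is $p$, forcing $d=1+2fp$; substituting this into $d\mid\bigl(\pm2(1\pm4y)H(4)+5p+1\bigr)$ and squaring via $H^2(4)\equiv p$ (Lemma \ref{lem3}) produces the single master congruence $4(1+4y)^2p\equiv25p^2+10p+1\pmod d$, after absorbing the sign of $y$ into the choice of primitive element permitted by Lemma \ref{lem1}. Since this is exactly the congruence analysed in Lemma \ref{lem7}, the same case split on $f$ and on the parity parameter $t$ leaves only $d\in\{11,\,1+2p,\,1+6p,\,1+8p\}$ under the stated relations $3p=4(1+4y)^2+3$, $7p=4(1+4y)^2-1$ and $p=4(1+4y)^2+1$, with $d=11$ arising only at $p=5$; the base case $p=5$ itself I would dispatch by the direct gcd computation, as was done for $u_5$.

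Finally I would exclude repeated factors exactly as in Theorem \ref{the3}: assuming $(1+2p)^2$, $(1+6p)^2$ or $(1+8p)^2$ divides the gcd, I substitute the corresponding defining relation (for instance $4(1+4y)^2=3p-3$) into the master congruence read modulo that square, obtaining an integer equality such as $-22p^2-13p-1=2(1+2p)^2m$ with $m\in\mathbb{Z}$, which is impossible on size grounds. Therefore $\gcd(U_i(4),4^p-1)$ is squarefree with prime factors among $1+2p,1+6p,1+8p$ and is bounded above by $(1+2p)(1+6p)(1+8p)$, which gives the asserted lower bound on $\Phi_4(u_i)$. The main obstacle is the middle step for $u_{13}$ and $u_{15}$: unlike $u_7$, their conjugate products are not already recorded in Lemma \ref{lem5}, so one must carry out the Lemma \ref{lem1}/Lemma \ref{lem2} expansion afresh and verify that each lands on a $5p+1$-type right-hand side; once that is confirmed, the descent and the exclusion of square factors are word-for-word the $u_5$ argument.
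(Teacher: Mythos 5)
Your proposal is correct and takes essentially the same route as the paper, which simply declares this proof ``similar to Theorem \ref{the3}'' and omits it: compute $U_{i}(4)$, pair it with its $t\mapsto -t$ conjugate (valid since $-1\in D_{2}$ for $p\equiv 5\pmod 8$), reduce the product to a $\pm 2(1+4y)H(4)+5p+1$ form, square via $H^{2}(4)\equiv p$ to reach the master congruence of Lemma \ref{lem7}, and rule out repeated factors as in Theorem \ref{the3}. Your expressions $U_{7}(4)=1+\eta_{1}(4)+2\eta_{0}(4)+3\eta_{3}(4)$, $U_{13}(4)=2+\eta_{1}(4)+2\eta_{2}(4)+3\eta_{3}(4)$, $U_{15}(4)=2+\eta_{0}(4)+2\eta_{1}(4)+3\eta_{2}(4)$ are all correct, and your observation that the conjugate products for $u_{13}$ and $u_{15}$ are not literally among the eight identities of Lemma \ref{lem5} --- so that two further expansions of the same type are required, which do indeed land on $\mp 2(1+4y)H(4)+5p+1$ --- is accurate and supplies detail that the paper glosses over.
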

\begin{proof}Since the proof is similar to Theorem \ref{the3}, we omit it.
\end{proof}

\begin{lemma}\label{lem8}
	Let $U_{10}(4)=3+\eta_{1}(4)+2\eta_{2}(4)+3\eta_{3}(4)(\textup{mod}\;4^{p}-1)$. If $d_{10}>3$ is a prime divisor of $\textup{gcd}(U_{10}(4),\frac{4^{p}-1}{3})$. Then the possible values of $d_{10}$ are $\textup{(1) } d_{10}=11$; $\textup{(2) } d_{10}=1+2p$ and $3p=4(3+4y)^{2}+83$.
\end{lemma}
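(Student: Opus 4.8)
The plan is to follow the route of the proof of Lemma~\ref{lem6}, exploiting the fact that $U_{10}(4)=3+\eta_{1}(4)+2\eta_{2}(4)+3\eta_{3}(4)$ is precisely the \emph{first} factor appearing on the left of Lemma~\ref{lem5}\,(1). First I would fix a prime $d_{10}>3$ dividing $\gcd(U_{10}(4),\frac{4^{p}-1}{3})$ and reduce the congruence of Lemma~\ref{lem5}\,(1) modulo $d_{10}$: since $d_{10}\mid U_{10}(4)$ the left-hand product vanishes mod $d_{10}$, so $2(3+4y)H(4)\equiv 5p+9 \pmod{d_{10}}$. Squaring and invoking Lemma~\ref{lem3} (whence $H^{2}(4)\equiv p$) then gives
\[
4(3+4y)^{2}p\equiv 25p^{2}+90p+81 \pmod{d_{10}}.
\]

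Next I would pin down the shape of $d_{10}$. Because $d_{10}\mid\frac{4^{p}-1}{3}\mid 4^{p}-1$ and $d_{10}\neq 3$, the multiplicative order of $4$ modulo $d_{10}$ equals the prime $p$; Euler's theorem gives $p\mid d_{10}-1$, and parity forces $d_{10}=1+2fp$ for some $f\in\mathbb{N}_{+}$, i.e. $2fp\equiv -1$ and $p^{-1}\equiv -2f \pmod{d_{10}}$. Dividing the displayed congruence by $p$ and substituting $p^{-1}\equiv-2f$ turns it into the purely arithmetic relation
\[
-4(3+4y)^{2}+25p+90-162f\equiv 0 \pmod{d_{10}}.
\]
Writing the (odd) left-hand integer as $(1+2h)(1+2fp)$ and reducing mod $8$ — using $p\equiv 5$ and $(3+4y)^{2}\equiv 1 \pmod 8$ — forces $1+2h=3+8t$, so everything collapses to solving the single Diophantine equation
\[
-4(3+4y)^{2}+25p+90-162f=(3+8t)(1+2fp),\qquad t\in\mathbb{Z}.
\]

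I would finish by the same three-way split on the sign of $t$ as in Lemma~\ref{lem6}, after the harmless substitution $3-4y\mapsto 3+4y$. Using the positivity estimate $25p+90-4(3+4y)^{2}\ge (6y-8)^{2}+15>0$, the case $t<0$ forces $162f>5+10fp$, hence $p\in\{5,13\}$, of which only $p=5$ survives and yields $d_{10}=11$; the case $t>0$ forces $25p>(3+8t)(1+2fp)$, which admits only $f=t=1$, giving $d_{10}=1+2p$ together with $3p=4(3+4y)^{2}+83$. The case $t=0$ is the delicate one: a reduction mod $3$ shows $p\equiv y^{2}\equiv 1 \pmod 3$, eliminating $f=1,4$ (each making $3\mid d_{10}$); $f=2$ is ruled out by a quadratic-residue argument, since $d_{10}\equiv 5 \pmod 8$ gives $(\frac{2}{d_{10}})=-1$ and hence $4^{p}\equiv -1\not\equiv 1$; and $f=3$ must be killed by a divisibility/CRT contradiction between $p\equiv 5\pmod 8$ and the forced congruence $p\equiv 1\pmod 7$. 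I expect the $t=0$, $f=3$ elimination to be the main obstacle, since unlike the size-driven cases it does not close by an inequality and instead needs the interaction of the cyclotomic constraint $p=x^{2}+4y^{2}$ with the modular conditions; everything else is bookkeeping essentially identical to Lemma~\ref{lem6}.
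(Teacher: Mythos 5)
Your proposal follows the paper's own proof of this lemma essentially step for step: the reduction via Lemma~\ref{lem5}\,(1) and Lemma~\ref{lem3} to $4(3+4y)^{2}p\equiv 25p^{2}+90p+81\pmod{d_{10}}$, the order argument forcing $d_{10}=1+2fp$, the mod-$8$ analysis yielding the Diophantine equation $-4(3+4y)^{2}+25p+90-162f=(3+8t)(1+2fp)$, and the identical three-way split on $t$ with the same sub-case eliminations (mod $3$ for $f=1,4$, the quadratic-residue argument for $f=2$, and the $p\equiv 1\pmod 7$ versus $p\equiv 5\pmod 8$ contradiction for $f=3$). The argument is correct relative to the paper and takes the same route, differing only in presentation.
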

\begin{proof}
	Let $p\equiv5\pmod8$ and $d_{10}>3$ be a prime divisor of $\textup{gcd}(U_{10}(4),\frac{4^{p}-1}{3})$. According to Lemma \ref{lem5} $\textup{(1)}$, we have $d_{10}|-2(3+4y)H(4)+5p+9$, that is $2(3+4y)H(4)\equiv5p+9(\textup{mod}\;d_{10})$. Then by Lemma \ref{lem3} we have $4(3+4y)^{2}p\equiv25p^{2}+90p+81(\textup{mod}\;d_{10})$. From $4^{p}\equiv1(\textup{mod}\;d_{10})$ and $d_{10}\not=3$, we have the order of 4 modulo $d_{10}$ is $p$. Since $d_{10}$ is odd prime, by Euler's Theorem we have $4^{d_{10}-1}\equiv1(\textup{mod}\;d_{10})$. Thus $p|d_{10}-1$. Then $d_{10}=1+2fp$, where $f\in\mathbb{N_{+}}$, that is $2fp\equiv-1(\textup{mod}\;d_{10})$. Hence we have $-4(3+4y)^{2}+25p+90-162f\equiv0(\textup{mod}\;d_{10})$ or $-4(3+4y)^{2}+25p+90-162f=(1+2h)(1+2fp)$. From the last congruence we get $3-2f\equiv1+2h+2f+4fh\pmod8$. Thus $h\equiv1\pmod2$ and $1+2h=3+8t$ where $t\in\mathbb{Z}$. So, we get
	\begin{equation}
		-4(3+4y)^{2}+25p+90-162f=(3+8t)(1+2fp)\tag{3}\label{code4}
	\end{equation}

The discussion can be divided into three cases as follows.

\noindent\textbf{(i) Case 1}: If $t<0$, since $25p+90-4(3+4y)^{2}\ge36y^{2}-96y+79=(6y-8)^{2}+15>0$, then by (\ref{code4}) we have $-162f<(3+8t)(1+2fp)$, that is $162f>(-3-8t)(1+2fp)\ge5+10fp$. Hence $p=5$ or $p=13$. Combining (\ref{code4}) we have, when $p=5$, $d_{10}=11$; when $p=13$, we obtain a contradiction.

\noindent\textbf{(ii) Case 2} If $t>0$, since $90-162f-4(3+4y)^{2}<0$, then by (\ref{code4}) we get $25p>(3+8t)(1+2fp)$. Hence $f=1$, $t=1$, that is $d_{10}=1+2p$ is a prime divisor of $\frac{4^{p}-1}{3}$ and $3p=4(3+4y)^{2}+83$.

\noindent\textbf{(iii) Case 3} If $t=0$, we have $-4(3+4y)^{2}+25p+90-162f=3(1+2fp)$. Since $90-4(3+4y)^{2}-162f-3<0$, it follows that $25p>6fp$. Hence, the possible values of $f$ are $1,2,3,4$ and $-y^{2}+p\equiv0\pmod3$, thus $p\equiv y^{2}\equiv1\pmod3$. In this case $f\not=1$ and $f\not=4$; if not, $1+2fp\equiv0\pmod3$, which is contradictory.

\textbf{(iii-a)} Let $f=2$. Then $d_{10}=1+4p$ and $d_{10}\equiv5\pmod8$. Thus $(\frac{2}{d_{10}})=(-1)^{\frac{d_{10}^{2}-1}{8}}=-1$, that is 2 is not a quadratic residue of modulo $d_{10}$. Further $2^{2p}\not\equiv1(\textup{mod}\;d_{10})$. Then $4^{p}\not\equiv1(\textup{mod}\;d_{10})$. This contradicts with $d_{10}|4^{p}-1$, thus $d_{10}\not=1+4p$.

\textbf{(iii-b)} Let $f=3$. Then $d_{10}=1+6p$, thus $-4(3+4y)^{2}+25p+90-486=3(1+6p)$, that is $-4(3+4y)^{2}=7(57-p)$. Hence $3+4y=7k$, where $k\in\mathbb{Z}$ and $-4\times7k^{2}+p-57=0$, then $p=4\times7k^{2}+57\equiv1(\textup{mod}\;7)$, also by $p\equiv5\pmod8$ and Chinese Remainder Theorem, we arrive at a contradiction.

The proof is completed.
\end{proof}

\begin{theorem}\label{the5}
	Let $u_{10}=\phi[s_{1}+1,s_{3}]$. Then the 4-adic complexity of quaternary sequence $u_{10}$ satisfies $\Phi_{4}(u_{10})\ge\lfloor \textup{log}_{4}(\frac{4^{p}-1}{3(1+2p)}+1)\rfloor$.
\end{theorem}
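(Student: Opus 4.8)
The plan is to follow the template of Theorem~\ref{the1}, since $u_{10}=\phi[s_1+1,s_3]$ is assembled from the same cyclotomic support sets and Lemma~\ref{lem8} plays the role that Lemma~\ref{lem6} did there. First I would determine the generating value $U_{10}(4)=\sum_{t=0}^{p-1}u_{10}(t)4^t\pmod{4^p-1}$ by decoding the inverse Gray mapping on each cyclotomic class. Since $s_1$ has support $D_0\cup D_1$, its complement $s_1+1$ is supported on $\{0\}\cup D_2\cup D_3$, while $s_3$ is supported on $D_1\cup D_2$; evaluating $\phi[\cdot,\cdot]$ on $\{0\},D_0,D_1,D_2,D_3$ produces the values $3,0,1,2,3$, so that $U_{10}(4)\equiv 3+\eta_1(4)+2\eta_2(4)+3\eta_3(4)\pmod{4^p-1}$, exactly the quantity analysed in Lemma~\ref{lem8}.

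Next I would extract the factor $3$. Because $4\equiv1\pmod3$ we have $\eta_i(4)\equiv|D_i|=\frac{p-1}{4}\pmod3$, so $U_{10}(4)\equiv 3+6\cdot\frac{p-1}{4}\equiv0\pmod3$, and $4^p\equiv1\pmod3$; hence $3\mid\gcd(U_{10}(4),4^p-1)$. I would then exclude $9$ exactly as before: the order of $4$ modulo $9$ is $3$, so $9\mid 4^p-1$ would force $3\mid p$, contradicting $p\equiv5\pmod8$.

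The heart of the argument is to bound the part of the gcd coprime to $3$. For $p=5$ I would evaluate $U_{10}(4)$ and $\gcd(U_{10}(4),4^5-1)$ directly and verify the claimed bound. For $p>5$, Lemma~\ref{lem8} forces every prime divisor $d_{10}>3$ of $\gcd(U_{10}(4),\frac{4^p-1}{3})$ to equal $1+2p$ with $3p=4(3+4y)^2+83$, so it only remains to rule out a repeated factor. Assuming $(1+2p)^2\mid\gcd(U_{10}(4),4^p-1)$, I would multiply by the cofactor in Lemma~\ref{lem5}(1) to get $(1+2p)^2\mid -2(3+4y)H(4)+5p+9$, square using $H^2(4)\equiv p$ from Lemma~\ref{lem3} to obtain $4(3+4y)^2p\equiv25p^2+90p+81\pmod{(1+2p)^2}$, and substitute $4(3+4y)^2=3p-83$. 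This collapses to $-22p^2-173p-81\equiv0\pmod{(1+2p)^2}$; since $-22p^2-173p-81=-(1+2p)(11p+81)$, the square would divide it only if $(1+2p)\mid 11p+81$, i.e.\ only if $(1+2p)\mid151$, which no admissible prime satisfies.

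Assembling these facts, $\gcd(U_{10}(4),4^p-1)$ is squarefree and divides $3(1+2p)$, so $\frac{4^p-1}{\gcd(U_{10}(4),4^p-1)}\ge\frac{4^p-1}{3(1+2p)}$ and the definition of $\Phi_4$ gives the stated lower bound. The only delicate point is the final divisibility computation excluding the squared prime factor; the remaining steps transfer from Theorem~\ref{the1} essentially verbatim, with $3-4y$ replaced by $3+4y$ and Lemma~\ref{lem5}(4) replaced by Lemma~\ref{lem5}(1).
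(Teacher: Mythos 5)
Your proposal is correct and follows essentially the same route as the paper's proof: the same decoding of $U_{10}(4)=3+\eta_{1}(4)+2\eta_{2}(4)+3\eta_{3}(4)$, the same treatment of the factor $3$ and exclusion of $9$, the same reduction via Lemma~\ref{lem8} for $p>5$, and the same square-exclusion using Lemma~\ref{lem5}(1) and Lemma~\ref{lem3}. If anything, your explicit factorization $-22p^{2}-173p-81=-(1+2p)(11p+81)$, which reduces the repeated-factor case to $(1+2p)\mid 151$ (forcing $p=75$, not prime), supplies a rigorous justification for the step the paper dismisses with only ``This is impossible.''
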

\begin{proof}
	To calculate the 4-adic complexity of $u_{10}=\phi[s_{1}+1,s_{3}]$, we need to consider $\textup{gcd}(U_{10}(4),4^{p}-1)$. Since
	\begin{align*}		U_{10}(4)&=\sum_{t=0}^{p-1}u_{10}(t)4^{t}(\textup{mod}\;4^{p}-1)\\&=\sum_{t=0}^{p-1}\phi[s_{1}(t)+1,s_{3}(t)]4^{t}(\textup{mod}\;4^{p}-1)\\&=\sum_{\substack{t=0\\t\in D_{1}}}^{p-1}4^{t}+2\sum_{\substack{t=0\\t\in D_{2}}}^{p-1}4^{t}+3\sum_{\substack{t=0\\t\in D_{3}\cup\left\lbrace 0\right\rbrace }}^{p-1}4^{t}(\textup{mod}\;4^{p}-1)\\&=3+\eta_{1}(4)+2\eta_{2}(4)+3\eta_{3}(4)(\textup{mod}\;4^{p}-1),
	\end{align*}
then we have $U_{10}(4)\equiv3+6\times\frac{p-1}{4}\equiv0\pmod3$ and $4^{p}\equiv1\pmod3$, hence $3|\textup{gcd}(U_{10}(4),4^{p}-1)$. If $9|4^{p}-1$, then from $4^{3}\equiv1\pmod9$ and $4^{p}\equiv1\pmod9$, we obtain $p=3$, which contradicts with $p\equiv5\pmod8$. Thus 9 does not divide $4^{p}-1$, that is 9 does not divide $\textup{gcd}(U_{10}(4),4^{p}-1)$.

 Let $p\equiv5\pmod8$ and $r_{10}$ be a divisor of $\textup{gcd}(U_{10}(4),4^{p}-1)$ and $\textup{gcd}(r_{10},3)=1$. For $p=5$, by the definition of sequence, $U_{10}(4)=3+4^{2}+2\times4^{4}+3\times4^{3}=723$, thus $\textup{gcd}(U_{10}(4),4^{5}-1)=\textup{gcd}(723,1023)=3$ and $\Phi_{4}(u_{10})=4$. For $p>5$, according to Lemma \ref{lem8} we get $r_{10}$ is a power of $2p+1$ and $2p+1$ is a prime that satisfying $3p=4(3+4y)^{2}+83$. If $(2p+1)^{2}|r_{10}$, by Lemma \ref{lem5} (1), we get $(2p+1)^{2}|-2(3+4y)H(4)+5p+9$, thus $4(3+4y)^{2}p\equiv25p^{2}+90p+81(\textup{mod}\;(2p+1)^{2})$ and by Lemma \ref{lem8} we have $3p=4(3+4y)^{2}+83$, that is $4(3+4y)^{2}=3p-83$. Thus $(3p-83)p-25p^{2}-90p-81\equiv0(\textup{mod}\;(2p+1)^{2})$. Then we have $-22p^{2}-173p-81=2(2p+1)^{2}m$, where $m\in\mathbb{Z}$. This is impossible.

Therefore, the possible prime factor of $\textup{gcd}(U_{10}(4),4^{p}-1)$ is $2p+1$, and there is no repetition factor. And $\textup{gcd}(U_{10}(4),4^{p}-1)=3$ when $2p+1$ is not a prime divisor of $4^{p}-1$, which completes the proof.
\end{proof}

\begin{theorem}\label{the6}
Let $u_{2}=\phi[s_{1},s_{3}]$, $u_{4}=\phi[s_{3},s_{4}]$, $u_{12}=\phi[s_{3}+1,s_{4}]$. Then the 4-adic complexity of these quaternary sequences satisfies $\Phi_{4}(u_{i})\ge\lfloor \textup{log}_{4}(\frac{4^{p}-1}{3(1+2p)}+1)\rfloor$, $i=2,4,12$, where $1+2p=11$ or the $p$ in $1+2p$ satisfies $3p=4(3+4y)^{2}+83$.
\begin{proof}
	Since the proof is similar to Theorem \ref{the5}, we omit it.
\end{proof}
\end{theorem}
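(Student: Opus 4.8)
The plan is to treat $u_{2},u_{4},u_{12}$ one at a time, following the template of Theorem \ref{the5}. First I would evaluate the inverse Gray map on each cyclotomic class $D_{j}$ (and on the point $0$, where both components vanish) and read off the generating polynomials at $4$: this gives $U_{2}(4)=\eta_{2}(4)+2\eta_{1}(4)+3\eta_{0}(4)$, $U_{4}(4)=\eta_{3}(4)+2\eta_{2}(4)+3\eta_{1}(4)$, and $U_{12}(4)=3+\eta_{2}(4)+2\eta_{3}(4)+3\eta_{0}(4)$, all modulo $4^{p}-1$. Since $4\equiv1\pmod3$ we have $\eta_{j}(4)\equiv|D_{j}|=\frac{p-1}{4}\pmod3$, so each $U_{i}(4)\equiv0\pmod3$; together with $4^{p}\equiv1\pmod3$ this shows $3\mid\gcd(U_{i}(4),4^{p}-1)$, while the order of $4$ modulo $9$ is $3$, so $9\mid4^{p}-1$ would force $3\mid p$, contradicting $p\equiv5\pmod8$. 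Hence exactly one factor of $3$ occurs.

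For a prime $d>3$ dividing $\gcd(U_{i}(4),\frac{4^{p}-1}{3})$ I would reproduce Lemma \ref{lem8}. The key input is an identity $U_{i}(4)\,U_{i}'(4)\equiv\pm2(3+4y)H(4)+5p+9\pmod{\frac{4^{p}-1}{3}}$ for a suitable partner $U_{i}'(4)$. For $u_{4}$ this is immediate, since $U_{4}(4)$ is exactly the first factor of Lemma \ref{lem5}(5), so I take $U_{4}'(4)=\eta_{1}(4)+2\eta_{0}(4)+3\eta_{3}(4)$. For $u_{2}$ and $u_{12}$ the required products are not literally among the eight in Lemma \ref{lem5}, but they are cyclic index-shifts of parts (5) and (1): writing $\sigma$ for a shift $\eta_{j}\mapsto\eta_{j\pm1}$, one checks $U_{2}(4)=\sigma(\eta_{3}(4)+2\eta_{2}(4)+3\eta_{1}(4))$ and $U_{12}(4)=\sigma(3+\eta_{1}(4)+2\eta_{2}(4)+3\eta_{3}(4))$. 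Because the relation of Lemma \ref{lem2} depends only on the index difference $k$ (so $\sigma$ fixes the $\delta$-terms and the normalization $\sum_{j}\eta_{j}(4)\equiv-1$) while $\sigma$ sends $H(4)\mapsto-H(4)$, applying $\sigma$ to Lemma \ref{lem5}(5) and (1) produces the desired identities for $U_{2}(4)$ and $U_{12}(4)$ with the sign of $H(4)$ flipped. Thus $d\mid\pm2(3+4y)H(4)+5p+9$, and squaring with Lemma \ref{lem3} yields $4(3+4y)^{2}p\equiv25p^{2}+90p+81\pmod d$, which is precisely the congruence driving Lemma \ref{lem8}.

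From here the three-case analysis of Lemma \ref{lem8} carries over verbatim: the order of $4$ modulo $d$ is $p$, so $d=1+2fp$; the quadratic-residue obstruction eliminates $f=2$, and a mod-$7$/mod-$8$ contradiction eliminates $f=3$, forcing $d=11$ (only when $p=5$) or $d=1+2p$ with $3p=4(3+4y)^{2}+83$. I would dispose of $p=5$ by direct evaluation of $U_{i}(4)$ as in Theorem \ref{the5}. To exclude a repeated factor I assume $(1+2p)^{2}\mid\gcd(U_{i}(4),4^{p}-1)$, substitute $4(3+4y)^{2}=3p-83$ into the congruence modulo $(1+2p)^{2}$, and obtain $-22p^{2}-173p-81=2(1+2p)^{2}m$; since $22p^{2}+173p+81=(1+2p)(11p+81)$ and $(1+2p)\nmid(11p+81)$ for the primes in question, this is impossible, exactly as in Theorem \ref{the5}. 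Consequently $\gcd(U_{i}(4),4^{p}-1)$ is squarefree and divides $3(1+2p)$, and the definition of $\Phi_{4}$ gives $\Phi_{4}(u_{i})\ge\lfloor\log_{4}(\frac{4^{p}-1}{3(1+2p)}+1)\rfloor$.

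The main obstacle is justifying the shift identities for $u_{2}$ and $u_{12}$: I must confirm that $\sigma$ is genuinely compatible with the reduction modulo $\frac{4^{p}-1}{3}$ used to derive Lemma \ref{lem5}, i.e. that both the $\delta=\frac{p-1}{4}$ contributions (arising for $k=2$) and the normalization $\sum_{j}\eta_{j}(4)\equiv-1$ are shift-invariant, so that the only net effect is $H(4)\mapsto-H(4)$. This is mechanical given Lemmas \ref{lem1} and \ref{lem2}, but it is the one place where $u_{2},u_{4},u_{12}$ truly differ from $u_{10}$, and an unqualified appeal to ``similar to Theorem \ref{the5}'' would leave a gap. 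Alternatively, and perhaps more safely, I would recompute the two products $U_{2}(4)U_{2}'(4)$ and $U_{12}(4)U_{12}'(4)$ directly from Lemmas \ref{lem1} and \ref{lem2}, exactly as Lemma \ref{lem5}(1) was proved, which avoids any appeal to symmetry.
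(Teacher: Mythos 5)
Your proposal is correct and follows exactly the route the paper intends (its own proof of this theorem is omitted with an appeal to Theorem~\ref{the5}): compute $U_i(4)$, pair it with its index-shifted partner to get $U_i(4)U_i'(4)\equiv\pm2(3+4y)H(4)+5p+9 \pmod{\frac{4^p-1}{3}}$, square via Lemma~\ref{lem3} to reach $4(3+4y)^2p\equiv25p^2+90p+81\pmod{d}$, and then rerun Lemma~\ref{lem8} together with the no-repeated-factor argument. The one substantive point where you add value is your observation that for $u_2$ and $u_{12}$ the required products are \emph{not} literally among the eight identities of Lemma~\ref{lem5} --- of the three sequences only $U_4(4)=\eta_3(4)+2\eta_2(4)+3\eta_1(4)$ appears there, as the first factor of Lemma~\ref{lem5}(5) --- so the bare ``similar to Theorem~\ref{the5}'' really does need either your shift-equivariance argument (the family of relations in Lemma~\ref{lem2} depends only on the index difference $k$ and, together with $\sum_j\eta_j(4)\equiv-1$, is stable under $\eta_j\mapsto\eta_{j\pm1}$, which sends $H(4)\mapsto-H(4)$ and fixes all constants) or a direct recomputation in the style of the proof of Lemma~\ref{lem5}(1); both fixes are valid. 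Your computed generating values $U_2(4)=\eta_2(4)+2\eta_1(4)+3\eta_0(4)$, $U_4(4)=\eta_3(4)+2\eta_2(4)+3\eta_1(4)$, $U_{12}(4)=3+\eta_2(4)+2\eta_3(4)+3\eta_0(4)$ are all correct, and your factorization $22p^2+173p+81=(1+2p)(11p+81)$, with $(1+2p)\nmid(11p+81)$ for prime $p$ (since $2(11p+81)=11(2p+1)+151$ and $151$ is prime), makes the exclusion of a repeated factor $(1+2p)^2$ cleaner than the paper's unexplained ``this is impossible.''
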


\begin{lemma}\label{lem9}
	Let $U_{16}(4)=2+\eta_{1}(4)+2\eta_{0}(4)+3\eta_{3}(4)(\textup{mod}\;4^{p}-1)$. If $d_{16}>3$ is a prime divisor of $gcd(U_{16}(4),\frac{4^{p}-1}{3})$. Then the possible values of $d_{16}$ are
	\textup{(1)} $d_{16}=11$; \textup{(2)} $d_{16}=1+2p$ and $3p=4(1-4y)^{2}+3$; \textup{(3)} $d_{16}=1+6p$ and $7p=4(1-4y)^{2}-1$; \textup{(4)} $d_{16}=1+8p$ and $p=4(1-4y)^{2}+1$.
\end{lemma}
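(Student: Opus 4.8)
The plan is to follow verbatim the template already established in the proof of Lemma \ref{lem7}, since the statement of Lemma \ref{lem9} has exactly the same shape (the same four candidates $11$, $1+2p$, $1+6p$, $1+8p$) and only the sign inside the quadratic changes from $1+4y$ to $1-4y$. The starting observation is that $U_{16}(4)=2+\eta_{1}(4)+2\eta_{0}(4)+3\eta_{3}(4)$ is precisely the first factor appearing in Lemma \ref{lem5}(8). Hence, if $d_{16}>3$ is a prime dividing $\gcd(U_{16}(4),\frac{4^{p}-1}{3})$, then $d_{16}$ divides the right-hand side of that identity, giving $2(1-4y)H(4)\equiv 5p+1\pmod{d_{16}}$.

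First I would square this congruence and invoke Lemma \ref{lem3} ($H^{2}(4)\equiv p$) to eliminate $H(4)$, obtaining $4(1-4y)^{2}p\equiv 25p^{2}+10p+1\pmod{d_{16}}$. Next, from $d_{16}\mid 4^{p}-1$ with $p$ prime and $d_{16}\neq 3$, the multiplicative order of $4$ modulo $d_{16}$ must be $p$, so Fermat's little theorem forces $p\mid d_{16}-1$; since $d_{16}$ is odd this gives $d_{16}=1+2fp$ for some $f\in\mathbb{N}_{+}$. Substituting $2fp\equiv-1\pmod{d_{16}}$ into the squared congruence and cancelling the invertible factor $p$ collapses it to the linear relation $-4(1-4y)^{2}+25p+10-2f\equiv 0\pmod{d_{16}}$.

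The decisive step is then the integer (not merely modular) version of this relation. Because $-4(1-4y)^{2}+25p+10-2f$ is odd and $d_{16}=1+2fp$ is odd, I would write $-4(1-4y)^{2}+25p+10-2f=(3+8t)(1+2fp)$ for some $t\in\mathbb{Z}$, where the residue $3\pmod 8$ of the cofactor is pinned down by reducing modulo $8$ using $p\equiv 5\pmod 8$ and $(1-4y)^{2}\equiv 1\pmod 8$ (the same computation that produced $3-2f\equiv 1+2f+2h+4fh\pmod 8$ in Lemma \ref{lem7}). A trichotomy on the sign of $t$ then finishes the argument: the positivity bound $25p+10-4(1-4y)^{2}\ge 36y^{2}+32y+31>0$ (the mirror of the Lemma \ref{lem7} bound, using $p=x^{2}+4y^{2}\ge 1+4y^{2}$) rules out $t<0$; the case $t>0$ forces $f=t=1$ and yields alternative (2) with $3p=4(1-4y)^{2}+3$; and $t=0$ bounds $f\le 4$, leaving the subcases $f=1,2,3,4$, which respectively give alternative (1) (via $19p=4(1-4y)^{2}-5$, forcing $p=5$ and $d_{16}=11$), a contradiction by the Legendre-symbol argument (since $d_{16}=1+4p\equiv 5\pmod 8$ makes $2$ a non-residue, so $4^{p}\not\equiv 1$), alternative (3) with $7p=4(1-4y)^{2}-1$, and alternative (4) with $p=4(1-4y)^{2}+1$.

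The main obstacle, everything else being mechanical substitution mirroring Lemma \ref{lem7}, is the $t=0$, $f=1$ subcase: one must show that the only solution of $19p=4(1-4y)^{2}-5$ compatible with $p=x^{2}+4y^{2}$ and $x\equiv 1\pmod 4$ is $p=5$. Writing it out gives $19x^{2}+12y^{2}+32y+1=0$, and since $12y^{2}+32y+1<0$ only for $y\in\{-1,-2\}$ (and $y=-2$ leaves $19x^{2}=15$, impossible), only $y=-1$, $x=1$, $p=5$ survives, the exact image under $y\mapsto -y$ of the $y=1$ solution found in Lemma \ref{lem7}. I would additionally double-check the $t<0$ positivity bound and the mod-$8$ bookkeeping under the sign flip $1+4y\mapsto 1-4y$, since those are the only places where the substitution could in principle disturb an inequality or a parity.
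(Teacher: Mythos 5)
Your proposal is correct and follows essentially the same route as the paper's own proof of Lemma \ref{lem9}: the same divisibility deduced from Lemma \ref{lem5}(8), squaring with Lemma \ref{lem3}, the order argument giving $d_{16}=1+2fp$, the integer equation $-4(1-4y)^{2}+25p+10-2f=(3+8t)(1+2fp)$ with the trichotomy on $t$, and the same subcases $f=1,2,3,4$ (including the Legendre-symbol contradiction at $f=2$). Your handling of the $f=1$ subcase ($19p=4(1-4y)^{2}-5$ forcing $y=-1$, $x=1$, $p=5$) is in fact more explicit than the paper's, which merely asserts that only $x=1$, $y=-1$ works.
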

	\begin{proof}
		Let $p\equiv5\pmod8$ and $d_{16}>3$ be a prime divisor of $\textup{gcd}(U_{16}(4),\frac{4^{p}-1}{3})$. According to Lemma \ref{lem5} $\textup{(8)}$, we have $d_{16}|-2(1-4y)H(4)+5p+1$, thus $5p+1\equiv2(1-4y)H(4)(\textup{mod}\;d_{16})$. Then by Lemma \ref{lem3} we have $4(1-4y)^{2}p\equiv25p^{2}+10p+1(\textup{mod}\;d_{16})$. From $4^{p}\equiv1(\textup{mod}\;d_{16})$ and $d_{16} \not=3$, we have the order of 4 modulo $d_{16}$ is $p$. Since $d_{16}$ is odd prime, by Euler's Theorem we have $4^{d_{16}-1}\equiv1(\textup{mod}\;d_{16})$. Thus $p|d_{16}-1$. Then $d_{16}=1+2fp$, where $f\in\mathbb{N_{+}}$, that is $2fp\equiv-1(\textup{mod}\;d_{16})$. Hence we have $-4(1-4y)^{2}+25p+10-2f\equiv0(\textup{mod}\;d_{16})$ or $-4(1-4y)^{2}+25p+10-2f=(1+2h)(1+2fp)$. From the last congruence we get $3-2f\equiv1+2f+2h+4fh(\textup{mod}\;8)$. Thus $h\equiv1(\textup{mod}\;2)$ and $1+2h=3+8t$ where $t\in\mathbb{Z}$. So, we get
		\begin{equation}
			-4(1-4y)^{2}+25p+10-2f=(3+8t)(1+2fp)\tag{4}\label{code5}
		\end{equation}
		
		The discussion can be divided into three cases as follows.
		
		\noindent\textbf{(i) Case 1}: If $t<0$, since $25p+10-4(1-4y)^{2}\ge36y^{2}+32y+31=(6y+\frac{8}{3})^{2}+\frac{215}{9}>0$,  then by (\ref{code5}) we have $-2f<(3+8t)(1+2fp)$, that is $2f>(-3-8t)(1+2fp)\ge5+10fp$. It is impossible.
		
		\noindent\textbf{(ii) Case 2}: If $t>0$, since $10-2f-4(1-4y)^{2}<0$, then by (\ref{code5}) we get that $25p>(3+8t)(1+2fp)$. Hence $f=1,t=1$, that is $d_{16}=1+2p$ is a prime divisor of $\frac{4^{p}-1}{3}$ and $3p=4(1-4y)^{2}+3$.
		
		\noindent\textbf{(iii) Case 3}: If $t=0$, here $-4(1-4y)^{2}+25p+10-2f=3(1+2fp)$. Since $10-2f-4(1-4y)^{2}-3<0$, it follows that $25p>6fp$. Hence, the possible values of $f$ are $1,2,3,4$.
		
		\textbf{(iii-a)} Let $f=1$. Then $d_{16}=1+2p$, thus $-4(1-4y)^{2}+25p+8=3(1+2p)$ or $19p=4(1-4y)^{2}-5$. In this case, only holds true when $x=1$ and $y=-1$. It follows that $p=x^{2}+4y^{2}=5$ and $d_{16}=1+2p=11$.
		
		\textbf{(iii-b)} Let $f=2$. Then $d_{16}=1+4p$ and $d_{16}\equiv5\pmod8$. 	Thus $\left( \frac{2}{d_{16}}\right)=-1$, that is 2 is not a quadratic residue of modulo $d_{16}$. Further $2^{2p}\not\equiv1(\textup{mod}\;d_{16})$. Then $4^{p}\not\equiv1(\textup{mod}\;d_{16})$. This contradicts with $d_{16}|4^{p}-1$, thus $d_{16}\not=1+4p$.
		
		\textbf{(iii-c)} Let $f=3$. Then $d_{16}=1+6p$, thus $-4(1-4y)^{2}+25p+4=3(1+6p)$ or $7p=4(1-4y)^{2}-1$.
		
		\textbf{(iii-d)} Let $f=4$. Then $d_{16}=1+8p$, thus $-4(1-4y)^{2}+25p+2=3(1+8p)$ or $p=4(1-4y)^{2}+1$.
		
		The proof is completed.
	\end{proof}

\begin{theorem}\label{the7}
	Let $u_{16}=\phi[s_{3}+1,s_{4}+1]$. Then the 4-adic complexity of quaternary sequence $u_{16}$ satisfies $\Phi_{4}(u_{16})\ge\lfloor \textup{log}_{4}(\frac{4^{p}-1}{(1+2p)(1+6p)(1+8p)}+1)\rfloor$.
\end{theorem}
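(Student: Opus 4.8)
The plan is to follow the template of the proof of Theorem~\ref{the3}, since $u_{16}$ occupies the same position in the fourth family that $u_5$ does in the second. First I would evaluate the generating polynomial at $4$. Applying the inverse Gray mapping to the bit-flipped components $s_3+1$ and $s_4+1$, one checks that $u_{16}(t)=2$ for $t=0$ and for $t\in D_0$, $u_{16}(t)=1$ for $t\in D_1$, $u_{16}(t)=0$ for $t\in D_2$, and $u_{16}(t)=3$ for $t\in D_3$. Summing $u_{16}(t)4^t$ then gives $U_{16}(4)\equiv 2+\eta_1(4)+2\eta_0(4)+3\eta_3(4)\pmod{4^p-1}$, precisely the quantity whose prime divisors are controlled by Lemma~\ref{lem9}.

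Next I would eliminate the prime $3$. Because $4\equiv1\pmod 3$ we have $\eta_i(4)\equiv\frac{p-1}{4}\pmod 3$ for every $i$, so $U_{16}(4)\equiv 2+6\cdot\frac{p-1}{4}\equiv 2\pmod 3$ and therefore $3\nmid\gcd(U_{16}(4),4^p-1)$. The base case $p=5$ is then settled by a direct computation: $U_{16}(4)=218$ and $\gcd(218,4^5-1)=\gcd(218,1023)=1$, so $\Phi_4(u_{16})=5$, which already satisfies the asserted bound.

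For $p>5$, write $r_{16}$ for the part of $\gcd(U_{16}(4),4^p-1)$ coprime to $3$. By Lemma~\ref{lem9} every prime divisor of $r_{16}$ lies in $\{1+2p,\,1+6p,\,1+8p\}$ (the value $11$ arises only at $p=5$), each carrying a defining identity relating $p$ and $4(1-4y)^2$. The crux is to rule out any of these primes occurring squared. Fixing a candidate $q\in\{1+2p,1+6p,1+8p\}$ and assuming $q^2\mid r_{16}$, Lemma~\ref{lem5}(8) gives $q^2\mid-2(1-4y)H(4)+5p+1$; squaring and substituting $H^2(4)\equiv p\pmod{\frac{4^p-1}{3}}$ from Lemma~\ref{lem3} (valid mod $q^2$ since $9\nmid 4^p-1$, exactly as in Theorem~\ref{the3}) yields $4(1-4y)^2p\equiv 25p^2+10p+1\pmod{q^2}$. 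Inserting the relation attached to $q$ in Lemma~\ref{lem9}, namely $4(1-4y)^2$ equal to $3p-3$, $7p+1$, or $p-1$ respectively, collapses the congruence to $q^2\mid g(p)$ with $g(p)=22p^2+13p+1$, $18p^2+9p+1$, or $24p^2+11p+1$.

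The decisive observation is that each $g(p)$ factors through $q$: one has $22p^2+13p+1=(1+2p)(11p+1)$, $18p^2+9p+1=(1+6p)(3p+1)$, and $24p^2+11p+1=(1+8p)(3p+1)$. Hence $q^2\mid g(p)$ would force $q$ to divide the cofactor. For $q=1+6p$ and $q=1+8p$ the cofactor is $3p+1$, which is positive and strictly smaller than $q$, an immediate contradiction; for $q=1+2p$ one reduces the cofactor once more, $11p+1\equiv p-4\pmod{q}$, and $p-4$ is again positive and smaller than $q$ for $p>5$, so the contradiction persists. Consequently $r_{16}$ is squarefree and $\gcd(U_{16}(4),4^p-1)\mid(1+2p)(1+6p)(1+8p)$, and the monotonicity of $\lfloor\log_4(\cdot)\rfloor$ delivers the stated bound. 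The only genuine content is this squared-factor step; since all three quadratics split off $q$ and leave a cofactor trapped strictly between $0$ and $q$, it presents no real obstacle.
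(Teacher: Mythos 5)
Your proof is correct and follows essentially the same route as the paper's: the same evaluation $U_{16}(4)=2+\eta_{1}(4)+2\eta_{0}(4)+3\eta_{3}(4)$, the same mod-$3$ exclusion and $p=5$ base case, and the same use of Lemma \ref{lem5}(8), Lemma \ref{lem3} and Lemma \ref{lem9} to reduce the square-free claim to $q^{2}\mid g(p)$ for $q\in\{1+2p,1+6p,1+8p\}$. Your explicit factorizations $22p^{2}+13p+1=(1+2p)(11p+1)$, $18p^{2}+9p+1=(1+6p)(3p+1)$, $24p^{2}+11p+1=(1+8p)(3p+1)$ together with the cofactor-size argument simply make rigorous the contradiction that the paper asserts tersely (via its parity-and-size step), so this is a presentational refinement rather than a genuinely different argument.
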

	\begin{proof}
		To calculate the 4-adic complexity of $u_{16}=\phi[s_{3}+1,s_{4}+1]$, we need to consider $\textup{gcd}(U_{16}(4),4^{p}-1)$. Since
		\begin{align*}			U_{16}(4)&=\sum_{t=0}^{p-1}u_{16}(t)4^{t}(\textup{mod}\;4^{p}-1)\\&=\sum_{t=0}^{p-1}\phi[s_{3}(t)+1,s_{4}(t)+1]4^{t}(\textup{mod}\;4^{p}-1)\\&=\sum_{\substack{t=0\\t\in D_{1}}}^{p-1}4^{t}+2\sum_{\substack{t=0\\t\in D_{0}\cup\left\lbrace 0\right\rbrace}}^{p-1}4^{t}+3\sum_{\substack{t=0\\t\in D_{3}}}^{p-1}4^{t}(\textup{mod}\;4^{p}-1)\\&=2+\eta_{1}(4)+2\eta_{0}(4)+3\eta_{3}(4)(\textup{mod}\;4^{p}-1),
		\end{align*}
then $U_{16}(4)\equiv2+6\times \frac{p-1}{4}\equiv2\pmod3$, hence 3 does not divide $\textup{gcd}(U_{16}(4),4^{p}-1)$. Let $p\equiv5\pmod8$ and $r_{16}$ be a divisor of $\textup{gcd}(U_{16}(4),4^{p}-1)$ and $\textup{gcd}(r_{16},3)=1$. For $p=5$, by the definition of sequence, $U_{16}(4)=2+4^{2}+2\times4+3\times4^{3}=218$, thus $\textup{gcd}(U_{16}(4),4^{5}-1)=\textup{gcd}(218,1023)=1$ and $\Phi_{4}(u_{16})=5$. For $p>5$, according to Lemma \ref{lem9}, the possible prime factors of $r_{16}$ are $1+2p$, $1+6p$ and $1+8p$, and satisfy the conditions in Lemma \ref{lem9}.
		
		If $(1+2p)^{2}|r_{16}$, by Lemma \ref{lem5} (8), we get $(1+2p)^{2}|-2(1-4y)H(4)+5p+1$, then by Lemma \ref{lem3} we get  $4(1-4y)^{2}p\equiv25p^{2}+10p+1(\textup{mod}\;(1+2p)^{2})$ and by Lemma \ref{lem9} we have $3p=4(1-4y)^{2}+3$, that is $4(1-4y)^{2}=3p-3$. Thus $(3p-3)p\equiv25p^{2}+10p+1(\textup{mod}\;(1+2p)^{2})$. Then we have $-22p^{2}-13p-1=2(1+2p)^{2}m$, where $m\in\mathbb{Z}$. We obtain a contradiction.
		
		If $(1+6p)^{2}|r_{16}$, according to Lemma \ref{lem5} (8) we have $(1+6p)^{2}|-2(1-4y)H(4)+5p+1$, then by Lemma \ref{lem3} we get $4(1-4y)^{2}p\equiv25p^{2}+10p+1(\textup{mod}\;(1+6p)^{2})$ and by Lemma \ref{lem9} we have $7p=4(1-4y)^{2}-1$, that is $4(1-4y)^{2}=7p+1$. Thus $(7p+1)p\equiv25p^{2}+10p+1(\textup{mod}\;(1+6p)^{2})$. Then we get $-18p^{2}-9p-1=2(1+6p)^{2}m$, where $m\in\mathbb{Z}$. This will create a contradiction.
		
		If $(1+8p)^{2}|r_{16}$, by Lemma \ref{lem5} (8), we get $(1+8p)^{2}|-2(1-4y)H(4)+5p+1$, then by Lemma \ref{lem3} we get $4(1-4y)^{2}p\equiv25p^{2}+10p+1(\textup{mod}\;(1+8p)^{2})$ and by Lemma \ref{lem9} we have $p=4(1-4y)^{2}+1$, that is $4(1-4y)^{2}=p-1$. Thus $(p-1)p\equiv25p^{2}+10p+1(\textup{mod}\;(1+8p)^{2})$. Then we get $-24p^{2}-11p-1=2(1+8p)^{2}m$, where $m\in\mathbb{Z}$. This is impossible.
		
		Therefore, the possible prime factors of $\textup{gcd}(U_{16}(4),4^{p}-1)$ are $1+2p$, $1+6p$ and $1+8p$, and there are no repetition factors. And if $1+2p$, $1+6p$ and $1+8p$ are not prime factors of $4^{p}-1$, the prime factors of $\textup{gcd}(U_{16}(4),4^{p}-1)$ do not contain $1+2p$, $1+6p$ and $1+8p$, respectively, which completes the proof.
	\end{proof}

\begin{theorem}\label{the8}
	Let $u_{6}=\phi[s_{1},s_{3}+1]$, $u_{8}=\phi[s_{3},s_{4}+1]$, $u_{14}=\phi[s_{1}+1,s_{3}+1]$. Then the 4-adic complexity of these quaternary sequences satisfies $\Phi_{4}(u_{i})\ge\lfloor \textup{log}_{4}(\frac{4^{p}-1}{(1+2p)(1+6p)(1+8p)}+1)\rfloor$, $i=6,8,14$, where $1+2p=11$ or the $p$ in $1+2p$ satisfies $3p=4(1-4y)^{2}+3$, the $p$ in $1+6p$ satisfies $7p=4(1-4y)^{2}-1$, the $p$ in $1+8p$ satisfies $p=4(1-4y)^{2}+1$.
\end{theorem}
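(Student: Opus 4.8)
The plan is to compute the three generating polynomials, then exploit a termwise complementation symmetry to collapse all three assertions onto a single new computation modeled exactly on Theorem~\ref{the7}. First I would evaluate each $U_i(4)=\sum_{t=0}^{p-1}u_i(t)4^t$ by splitting the index range according to the cyclotomic classes and the behaviour of $\phi$, precisely as in the proof of Theorem~\ref{the7}. Tracking the inverse Gray map over $\{0\}\cup D_0\cup D_1\cup D_2\cup D_3$ should give
\begin{align*}
U_6(4)&\equiv 1+\eta_3(4)+2\eta_0(4)+3\eta_1(4),\\
U_8(4)&\equiv 1+\eta_0(4)+2\eta_1(4)+3\eta_2(4),\\
U_{14}(4)&\equiv 2+\eta_0(4)+3\eta_2(4)+2\eta_3(4)\pmod{4^p-1}.
\end{align*}

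The key observation is that flipping the first binary component acts on the Gray values by $\phi[a+1,b]=3-\phi[a,b]$. Hence $u_8$ and $u_{16}$, which share the second component $s_4+1$ and differ only through $s_3\mapsto s_3+1$, satisfy $u_8(t)=3-u_{16}(t)$ for every $t$; likewise $u_{14}(t)=3-u_6(t)$. Multiplying by $4^t$, summing, and using $\sum_{t=0}^{p-1}4^t=(4^p-1)/3$ yields the exact congruences $U_8(4)\equiv -U_{16}(4)$ and $U_{14}(4)\equiv -U_6(4)\pmod{4^p-1}$, so that $\textup{gcd}(U_8(4),4^p-1)=\textup{gcd}(U_{16}(4),4^p-1)$ and $\textup{gcd}(U_{14}(4),4^p-1)=\textup{gcd}(U_6(4),4^p-1)$. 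Consequently the bound for $u_8$ is inherited verbatim from Theorem~\ref{the7}, while the bound for $u_{14}$ follows once $u_6$ is settled. This reduces the whole theorem to the single sequence $u_6$.

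For $u_6$ I would mirror Lemma~\ref{lem9} and Theorem~\ref{the7}. Since $U_6(4)=1+\eta_3(4)+2\eta_0(4)+3\eta_1(4)$ is exactly the first factor appearing in Lemma~\ref{lem5}\,(6), any prime $d_6>3$ dividing $\textup{gcd}(U_6(4),\frac{4^p-1}{3})$ must divide $2(1-4y)H(4)+5p+1$; squaring and applying Lemma~\ref{lem3} produces $4(1-4y)^2p\equiv 25p^2+10p+1\pmod{d_6}$, which is the very congruence analysed in Lemma~\ref{lem9}. The same case analysis ($d_6=1+2fp$, the parity reduction, and the elimination of $f=2$ through the quadratic character of $2$) then forces $d_6\in\{11,1+2p,1+6p,1+8p\}$ under the stated conditions $3p=4(1-4y)^2+3$, $7p=4(1-4y)^2-1$, $p=4(1-4y)^2+1$. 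Ruling out repeated factors is identical to Theorem~\ref{the7}: substituting each condition into $4(1-4y)^2p\equiv 25p^2+10p+1$ modulo $(1+2fp)^2$ reproduces the same impossible identities $-22p^2-13p-1=2(1+2p)^2m$, $-18p^2-9p-1=2(1+6p)^2m$, $-24p^2-11p-1=2(1+8p)^2m$. The residue $U_6(4)\equiv 1\pmod 3$ shows $3\nmid\textup{gcd}(U_6(4),4^p-1)$, so the factor $3$ never enters the denominator, and the case $p=5$ is dispatched by hand ($U_6(4)=121$, $\textup{gcd}(121,1023)=11=1+2p$); the definition of $\Phi_4$ then delivers the bound.

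The step requiring the most care is establishing the two complementation identities \emph{exactly} modulo $4^p-1$: one must verify that the relation $v(t)=3-u(t)$ persists at $t=0$ (where the paired sequences share the same second component, so only the uniform map $w\mapsto 3-w$ acts) and that the contribution $3\sum_t 4^t=4^p-1$ vanishes, together with the mod-$3$ check guaranteeing $3$ stays out of the gcd. Once these reductions are in place, nothing genuinely new remains: the analysis of $u_6$ is a verbatim transcription of the $u_{16}$ argument with $(1-4y)$ playing the role it already has in Lemma~\ref{lem9}, so no fresh Diophantine obstruction can arise.
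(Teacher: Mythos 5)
Your proposal is correct, and its engine is the same as the paper's intended argument: the paper dismisses Theorem~\ref{the8} with ``similar to Theorem~\ref{the7}'', and your treatment of $u_6$ --- recognizing $U_6(4)=1+\eta_3(4)+2\eta_0(4)+3\eta_1(4)$ as the first factor of Lemma~\ref{lem5}(6), squaring, invoking Lemma~\ref{lem3}, rerunning the case analysis of Lemma~\ref{lem9}, and excluding repeated prime factors exactly as in Theorem~\ref{the7} --- is precisely that argument. Where you genuinely add something is the complementation identity $\phi[a+1,b]=3-\phi[a,b]$, which yields $U_8(4)\equiv-U_{16}(4)$ and $U_{14}(4)\equiv-U_6(4)\pmod{4^p-1}$, hence $\gcd(U_8(4),4^p-1)=\gcd(U_{16}(4),4^p-1)$ and $\gcd(U_{14}(4),4^p-1)=\gcd(U_6(4),4^p-1)$. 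This matters because, as your own computations show, $U_8(4)=1+\eta_0(4)+2\eta_1(4)+3\eta_2(4)$ and $U_{14}(4)=2+\eta_0(4)+3\eta_2(4)+2\eta_3(4)$ do \emph{not} occur as factors anywhere in Lemma~\ref{lem5}, so a literal ``repeat Theorem~\ref{the7} three times'' is impossible for these two sequences; one must first flip the sign modulo $4^p-1$ (your identity, or equivalently adding $3\sum_{t}4^t\equiv 0$), after which Lemma~\ref{lem5}(8) and (6) become applicable and the sign washes out upon squaring, leaving the Diophantine analysis unchanged. Your reduction makes this hidden step explicit and collapses three parallel analyses into one, and the remaining details (the mod-$3$ computations, the $p=5$ evaluation $U_6(4)=121$ under the same choice of cyclotomic classes the paper uses to get $U_{16}(4)=218$, and the three impossible identities ruling out squared factors) all check out.
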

	\begin{proof}
		Since the proof is similar to Theorem \ref{the7}, we omit it.
	\end{proof}

\section{Conclusions}
It is interesting to study the 4-adic complexity of sequences with good correlation generated by FCSRs. In this paper, we estimate the 4-adic complexity of several classes of  quaternary sequences in reference \cite{w14}, which have low autocorrelation and high linear complexity. It turns out these sequences have high 4-adic complexity, which is sufficient safe to resist the attack of the RAA.

\bigskip

\end{document}